\newcommand{\id}{{\mathbbm{1}}}
\newcommand\bovermat[2]{%
  \makebox[0pt][l]{$\smash{\overbrace{\phantom{%
    \begin{matrix}#2\end{matrix}}}^{\text{#1}}}$}#2}
\newcommand\trans{^{\mathsf{T}}}
\theoremstyle{plain}
\newtheorem{thm}{Theorem}
\newtheorem{defn}[thm]{Definition} 
\newtheorem{lemma}[thm]{Lemma}
\newtheorem{corollary}[thm]{Corollary}
\newtheoremstyle{remark}
  {}
  {}
  {}
  {}
  {\itshape}
  {.}
  {2ex}
  {}
\theoremstyle{remark}
\newtheorem*{remark}{Remark}
\newcommand\blfootnote[1]{%
	\begingroup
	\renewcommand\thefootnote{}\footnote{#1}%
	\addtocounter{footnote}{-1}%
	\endgroup
}
\begin{document}

\title{Quantum linear network coding for entanglement\\distribution in restricted architectures}
\author[1]{Niel de Beaudrap$^\dagger$}
\author[1,2]{Steven Herbert$^\dagger$}
\affil[1]{\small{\textit{Department of Computer Science, University of Oxford, UK}}}
\affil[2]{\small{\textit{Riverlane, 1st Floor St Andrews House, 59 St Andrews Street, Cambridge,  UK}}}
\date{}

\maketitle

\begin{abstract}
In\blfootnote{$^\dagger$
Both authors contributed equally to the results of this article; the author order is merely alphabetical.}
\blfootnote{$^\ddagger$ Contact: niel.debeaudrap@gmail.com, sjh227@cam.ac.uk}
this paper we propose a technique for distributing entanglement in architectures in which interactions between pairs of qubits are constrained to a fixed network $G$.
This allows for two-qubit operations to be performed between qubits which are remote from each other in $G$, through gate teleportation.
We demonstrate how adapting \emph{quantum linear network coding} to this problem of entanglement distribution in a network of qubits can be used to solve the problem of distributing Bell states and GHZ states in parallel, when bottlenecks in $G$ would otherwise force such entangled states to be distributed sequentially.
In particular, we show that by reduction to classical network coding protocols for the $k$-pairs problem or multiple multicast problem in a fixed network $G$, one can distribute entanglement between the transmitters and receivers with a Clifford circuit whose quantum depth is some (typically small and easily computed) constant, which does not depend on the size of $G$, however remote the transmitters and receivers are, or the number of transmitters and receivers.
These results also generalise straightforwardly to qudits of any prime dimension.
We demonstrate our results using a specialised formalism, distinct from and more efficient than the stabiliser formalism, which is likely to be helpful to reason about and prototype such quantum linear network coding circuits.
\end{abstract}

\section{Introduction}

One of the most important problems to solve, in the realisation of quantum algorithms in hardware, is how to map operations onto the architecture.
Scalable architectures for quantum computers are not expected to have all-to-all qubit connectivity:  if we describe the pairs of qubits which may interact directly by the edges of a graph (or ``network'') $G$ whose nodes are qubit labels, then $G$ will not contain all pairs of nodes.
This raises the question of how best to realise two-qubit operations on data stored on pairs of qubits $a,b \in G$ which are not adjacent in $G$.

One solution is to swap qubit states through the network until they are on adjacent nodes~\cite{Cowtan-etal-2019,me,me2}.
An alternative, which is possible when not all qubits in the architecture are being used to store data, is to distribute entanglement between qubits $a', b' \in G$ which are adjacent to $a$ and $b$ respectively.
This allows a gate between $a$ and $b$ to be performed by teleportation~\cite{newref1}.
Which approach is the more practical will depend on whether it is economical to leave some number of qubits free to use as auxiliary space, but also on how much noise the state is subject to as a result.
The question of which approach will lead to more accumulated noise will be determined in part by how long it takes to realise the chosen approach, in total, over all operations to be performed in a given algorithm.

To reduce the time taken in distributing entanglement for two-qubit operations, we may consider how entangled states may be distributed between multiple pairs in parallel.
A direct approach may result in crossing paths in the network $G$, forcing the entangled pairs to be distributed in sequence rather than in parallel.
The issue of crossing paths for transmissions across a network is also potentially an issue in conventional networks.
In that setting, one solution to this problem is \emph{network coding}, in which independent signals in a network may share bandwidth by allowing intermediate nodes to combine their signals in appropriate ways to distribute complete information about each signal across the network.
(A simple illustrative example of this, the ``butterfly network'', is shown in Fig.~\ref{f01}.)
This motivates the idea of using network coding to realise entanglement distribution between multiple pairs of qubits in parallel using similar concepts.

Previous work~\cite{Leung2006,Kobayashi2009,Kobayashi2011,Satoh2012} has shown that when a classical binary linear network code exists for the ``multiple unicast'' problem (the problem of sending signals between $k$ pairs of sources and targets) on a classical network, then there exists a quantum network code to distribute Bell states between each source--target pair in a quantum network of the same connectivity.
However, these results suppose that each `node' is a small device, hosting multiple qubits and able to perform arbitrary transformations on them before transmitting onward ``messages'' through the network.
This does not reflect the architecture of many hardware projects to realise quantum computers, in which the `nodes' are single qubits, and edges are pairs which may be acted on by a quantum operation (such as a CNOT) rather than a directed communications link~\cite{IBM1,GoogleQC,Rigetti,Intel,IBM2}.
\begin{figure}[!t]
	\centering
	\includegraphics[width=0.73\linewidth]{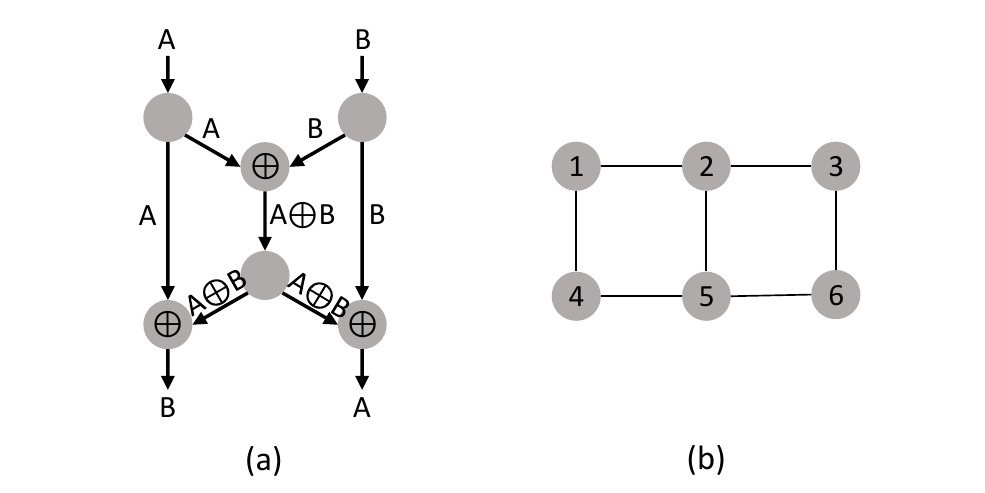}
	\captionsetup{width=0.95\linewidth}
	\caption{\small{(a) Example of network coding over the Butterfly network for input bitstreams ``A'' and ``B'' -- nodes either perform a modulo-2 sum of the incoming bitstreams (when labelled $\oplus$) or fanout the single incoming bitstream otherwise; (b) the Butterfly shown as a (topologically identical) $2 \times 3$ grid, with node order shown by the labelled indices -- as the Butterfly network provides a useful illustrative example for many of the results presented in this paper this ordering is defined and used consistently throughout the paper (for example for qubit ordering).}}
	\label{f01}
\end{figure}

In this article, we describe techniques to translate linear network coding protocols on a directed graph $G$, to circuits --- called here ``QLNC circuits'' --- which involve only preparation of $\ket{0}$ and $\ket{+}$ states, CNOT gates along edges of $G$, unitary Pauli gates (possibly conditioned on classical information, which is communicated without constraints), and measurements of the observables $X$ and $Z$.
Our techniques extend also to the multiple multicast problem, serving to distribute Bell and GHZ states across such a network $G$.

We show that QLNC circuits allow us to distribute entanglement in a circuit whose quantum depth can be bounded from above by simple properties of the architecture network $G$, leading to a modest sized constant for reasonable choices of $G$ (\emph{e.g.},~12 for a square lattice provided no receiver node has four in-coming links).~\label{discn:introSquareLatticeBound}
In particular, the depth is independent of the number of qubit pairs to be entangled, the distance between the nodes in any of the pairs, or the total number of other qubits involved.
In addition to this constant quantum depth, is a dependency on computing classical controls for some of the quantum operations, which is at worst logarithmic in the number of qubits involved.
These are lower circuit depths than can be achieved by realising two-qubit operations by routing~\cite{Cowtan-etal-2019,KMvdG-2019}.
Furthermore, while our results are in some ways similar to what can be achieved with graph states (as described by Hahn~\emph{et al.}~\cite{Hahn}), our techniques are somewhat more versatile and also easier to analyse.
We make these comparisons more precise in Section~\ref{sec:compare}.

As well as describing how network codes can be used to distribute entanglement, in a setting where the nodes in the network represent individual qubits which may interact in pairs along the network, we also note two features of QLNC circuits that make them more versatile than classical linear network coding protocols:
\begin{itemize}
\item
    QLNC circuits can be used to simulate a classical linear network code ``out of order''.
    (Indeed, this is required for our main result, which simulates a linear network code in a depth which may be smaller than the length of the longest transmitter--receiver path in the classical network.)
\item
    Entanglement swapping allows for QLNC circuits to perform entanglement distribution tasks, that \emph{do not} correspond to classical linear network coding protocols --- that is, for networks $G$ in which the corresponding linear network coding problem has no solution.
\end{itemize}
These results hold as a result of using the (unconstrained) classical control to allow a QLNC circuit to simulate a classical linear network code, on a network with more edges than $G$.

Our analysis of QLNC circuits involves a simple computational formalism, which may be of independent interest.
The formalism is similar to classical network coding in its representation of data with time, and allows the easy use of classical network coding results and intuitions to reason about entanglement distribution circuits.
While QLNC circuits are stabiliser circuits, and can be efficiently simulated using the stabiliser formalism, QLNC circuits do not require the full power of the stabiliser formalism to simulate.
This allows us to reason about them more efficiently than is possible even with the stabiliser formalism.
This yields at least a factor $2$ improvement in space and time requirements, and achieves $O(n)$ complexity (without using sparse matrix techniques) to simulate protocols which only involve superpositions of $O(1)$ standard basis states.
These techniques can also be applied to network codes on qudits of prime dimension.

The remainder of the paper is organised as follows.
In Section~\ref{prev} we review existing literature on classical and quantum network coding.
In Section~\ref{prelim} we introduce the QLNC formalism, and present the main results described above.
In Section~\ref{qubitsection} we give the generalisation for prime $d$-level qudit systems.
In Section~\ref{comp} we discuss the computational complexity of simulating circuits using the QLNC formalism, as well as that of discovering linear network codes.
Finally, in Section~\ref{app1}, we include a detailed proof of the Theorem~\ref{mainthm1}, which demonstrates the way in which a QLNC circuit may be regarded as realising a linear network code on an extended network $G' \supseteq G$.

\section{Preliminaries}
\label{prev}

We begin by reviewing the literature on classical and quantum network coding, and an overview of techniques to help the realisation of two-qubit operations in limited architectures.

\subsection{Classical network coding}

\begin{figure}[!t]
	\centering
	\includegraphics[width=0.29\linewidth]{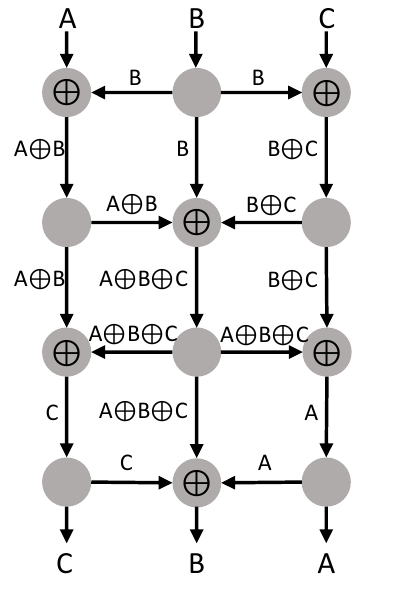}
	\captionsetup{width=0.95\linewidth}
	\caption{\small{Another example of network coding, on a $4\times 3$ grid with three bitstreams ``A'', ``B'' and ``C''.}}
	\label{f01a}
\end{figure}

Network coding, as a means to increase information flow in mesh networks beyond what can be achieved by routing alone, was conceptualised by Ahlswede~\textit{et al}~\cite{Ahlswede2000}.
Rather than simply re-transmit one or more incoming information signals on different onward channels, a network coding protocol allows the nodes in the network to compute some function of its signals (\emph{e.g.},~the \texttt{xor} of bit streams from different incoming links) and to transmit the outcome, in principle ``encoding'' the signals.
The standard example of network coding, providing a simple and clear illustration of the underling principle, is the Butterfly network (Fig.~\ref{f01}), which enables simultaneous transmission between the diagonally opposite corners.
Fig.~\ref{f01a} illustrates a more elaborate network which solves a slightly more complicated signal transmission problem. These examples, which represent a proof of principle of the benefits of network coding, both use binary \textit{linear} network coding -- that is each node can encode its inputs by performing modulo-2 sums. Binary linear network coding provides the basis for the CLifford group QLNCs we address in this paper, however it is worth noting that much of the classical literature considers a more general setting in which the network nodes can encode the input data-streams by performing modulo-$r$ summations (for $r>3$) and / or nonlinear functions. Additionally, these examples are concerned with only one type of network coding task, namely the \textit{multiple unicast} problem (also known as the $k$-pairs problem), in which some number $k \geqslant 1$ of transmitter nodes each send different information streams each to a single, distinct receiver node.
Other problems for which one may consider network coding protocols are the \emph{multicast} and \emph{broadcast} problems (in which a single source node sends the same information to some subset of~the~nodes\,/\,all~nodes in the network respectively), and the \textit{multiple multicast} problem (in which multiple transmitters send different information streams to different subsets of the other nodes).

The advantage of network coding is most important in the case that the network $G$ has edges which are all directed (as illustrated in the examples of Figs.~\ref{f01} and~\ref{f01a}).
In the case of directed networks, it is always possible to contrive situations in which network coding can yield an unbounded increase in information throughput (for a $k$-pairs example see Fig.~\ref{f025}).
\begin{figure}[!t]
	\centering
	\includegraphics[width=0.73\linewidth]{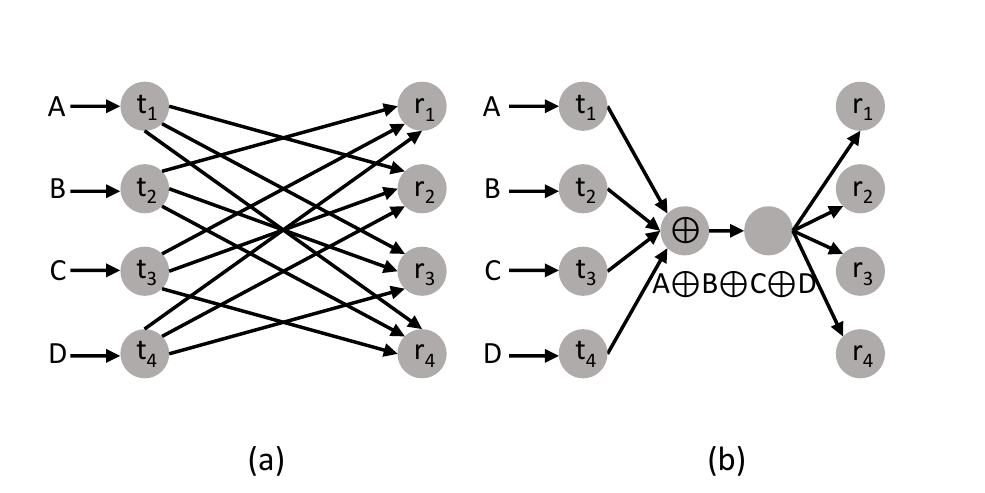}
	\captionsetup{width=0.95\linewidth}
	\caption{\small{An example of directed network in which network coding can yield an arbitrary speed-up in the $k$-pairs setting. The network is a directed graph, consisting of transmitters on the left hand side, and receivers on the right-hand side. Each receiver is paired with the transmitter horizontally left (as shown by the indexed ``t''s and ``r''s). The network consists of two components, a bipartite graph between the transmitters and receivers, with direct links t$_i$-r$_i$ missing, shown in (a); and all of the transmitters connected to all of the receivers through a single directed link, shown in (b). Clearly without network coding all of the transmitter-receiver pairs will have to share the link in (b), and the links in (a) will be useless, however with network coding each of the transmitters can broadcast its bitstream to each output, and the left-most of the central nodes in (b) can perform a modulo-2 sum of all its inputs and forward the result, and the right-most of the central nodes in (b) simply broadcasts this to each receiver. So it follows that each receiver receives 4 bitstreams -- the modulo-2 sum of all the transmissions, via the central nodes, and the bitstreams from all transmitters other than its pair, thus can perform a modulo-2 sum to resolve the bitstream from its paired transmitter. That is, for example, r$_1$ receives B, C and D directly from t$_2$, t$_3$ and t$_4$ respectively, as well as A$\oplus$B$\oplus$C$\oplus$D from the central nodes in (b), and can thus perform the modulo-2 sum of all its inputs A$\oplus$B$\oplus$C$\oplus$D$\oplus$B$\oplus$C$\oplus$D=A, as required. It can easily be appreciated that this construction can extend to any number of transmitter-receiver pairs.}}
	\label{f025}
\end{figure}
However, in many practical contexts, the available communication channels are bidirectional.
For such networks, it is often not clear that network coding will yield any benefits at all.
For the broadcast setting, it has been proven that there is no benefit to the application of network coding over standard routing~\cite{Li2004a}.
For tasks of transmitting long information streams in undirected networks, other techniques than network coding appear to be competitive.
For instance, \emph{fractional routing} involves dividing up a single bitstream and forwarding it along different routes, storing them locally in between rounds of use of the network.
Fig.~\ref{f03} illustrates how fractional routing can achieve the same asymptotic throughput as network coding in the Butterfly Network.

The \textit{multiple unicast conjecture} posits that there is no benefit to the application of network coding over standard routing for multiple unicast channels, if fractional routing is possible \cite{Li2004b}. 
while the multiple unicast conjecture remains unproven, the improvement possible by using network coding has been upper-bounded to typically low factors for various restricted multiple unicast settings \cite{Cai2015}. 
This rather sets the tone for the other settings considered, with an upper bound equal to two on the factor improvement over routing achievable by applying network coding being proven for the multicast and multiple multicast settings \cite{Li2009}. Table~\ref{tab1} summarises the benefits of network coding in various settings.

\begin{figure}[!t]
	\centering
	\includegraphics[width=\linewidth]{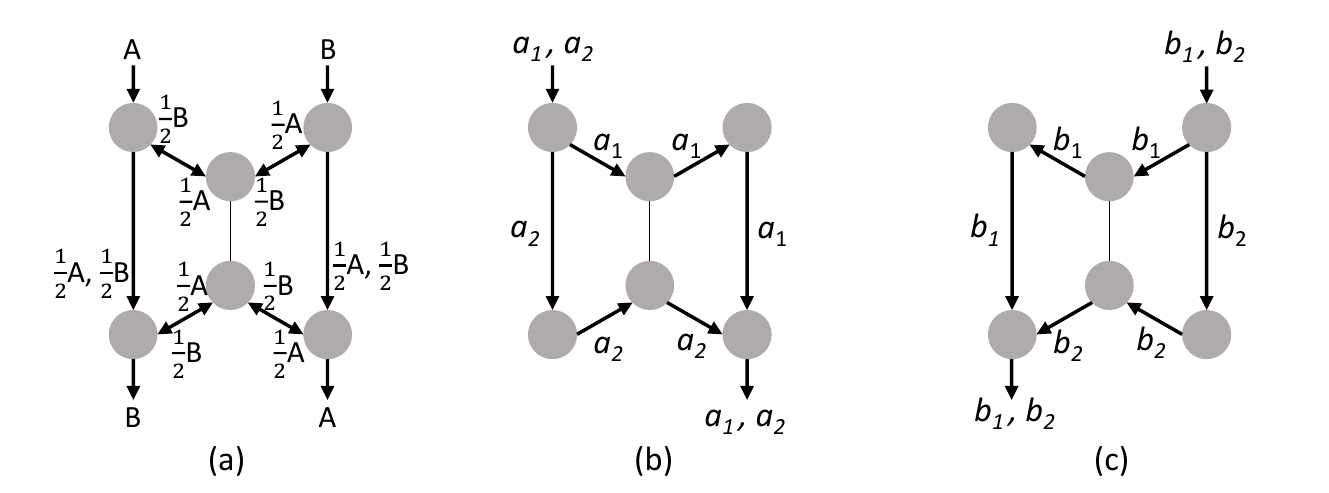}
	\captionsetup{width=0.95\linewidth}
	\caption{\small{Demonstration of achieving the same throughout on the Butterfly as network as network coding by using fractional routing instead. This is achieved by splitting each of bitstreams ``A'' and ``B'' into halves, and forwarding half on each link, as shown in (a). That is (for example) let A consist of two bits $a_1$ and $a_2$ and likewise B two bits $b_1$ and $b_2$. In the first time interval the two bits of A are forwarded on different routes, as shown in (b), and then likewise for the bits of B, shown in (c). Thus time-sharing is used to achieve the fractional routing, and A and B can each forward two bits in a total of two time intervals, which corresponds to the same bit-rate as is achieved using network coding, as shown in Fig.~\ref{f01}.}}
	\label{f03}
\end{figure}

\begin{table}[t]
    \bigskip
	\centering
	\begin{tabular}{ l c c c c c} 
	& \textbf{Broadcast} & \textbf{Multicast} & \textbf{Multiple multicast}& \textbf{Multiple unicast}\\
		\hline\hline 
		\textbf{Directed} & $\infty$ & $\infty$ & $\infty$ & $\infty$ \\
		\textbf{Undirected} & 1 & $\leq 2$ & $\leq 2$ & 1 (conjectured)\\
	\end{tabular}
	\captionsetup{width=0.95\linewidth}
	\caption{\small{Maximum factor increase in information throughput using network using, for various network and information transfer types.}}
	\label{tab1}
\end{table}

\subsection{Quantum network coding}

The concept of network coding has been adapted to quantum state transmission~\cite{Leung2006}, and then to entanglement swapping~\cite{Kobayashi2009, Kobayashi2011, Satoh2012} in quantum communications networks.
Because of the limitation imposed by the no-cloning theorem, the $k$-pairs problem (or for entanglement swapping, the problem of distributing entanglement between $k$ different transmitter--receiver pairs) is typically the problem studied in the quantum case. It has been shown that any situation in which a classical network code for multiple unicast exists, then there is also a quantum network code for entanglement swapping~\cite{Leung2006, Kobayashi2009, Kobayashi2011}. These results include quantum generalisations of both linear and non-linear network codes.
It is with the former that we are concerned in this article, and Satoh~\textit{et al.} provide a very good visual demonstration of the correspondence between classical and quantum linear network coding for the case of the Butterfly graph~\cite{Satoh2012}.
In the case of ``classically assisted'' quantum linear network coding, in which classical communication is less constrained than quantum communication, de Beaudrap and Roetteler~\cite{deBeaudrap2014} show how quantum network coding can be described as an instance of measurement-based quantum computation involving $X$ observable measurements to remove correlations between the input states and the states of qubits (or qudits) at interior nodes. 

One feature which is common for these existing pieces of research is that they consider quantum networks which are in the same essential form as classical telecommunications networks: nodes which have more than one qubit of internal memory (with negligible latency operations), which are connected to each other by channels with significant latency.
This model is appropriate for entanglement distribution in quantum communication networks, but for entanglement distribution in quantum computers it may be relevant to consider a finer scale model in which each node is itself a single qubit.
Note that in this setting, fractional routing is made more complicated by the inability to store and transmit information without resetting the state of the qubit, making the multiple unicast less plausible.
(In the case that the ``information stream'' consists of a single Bell state between each of the $k$ transmitter/receiver pairs, fractional coding loses its meaning entirely.)

\subsection{Other approaches to realise two-qubit operations in limited architectures}
\label{sec:compare}

While we consider the problem of distributing entanglement in limited quantum architectures, this is not the only approach to the problem of realising two-qubit operations between remote qubit pairs.
We consider below other approaches to this problem

\subsubsection{Realising two-qubit operations via mapping/routing}

One way in which two-qubit operations can be realised between qubits is simply by moving the data stored by these qubits to adjacent nodes, \emph{e.g.},~using logical SWAP operations to exchange the data held by adjacent qubits.
We may then consider the way that such a circuit of SWAP gates (or several such exchanges of qubits) can be decomposed into more primitive gates~\cite{ZPW-2018,Cowtan-etal-2019}.
More generally, we may consider how to decompose a single ``long-distance'' operation (such as a CNOT) between remote qubits, into primitive gates consisting of single-qubit gates on adjacent qubits~\cite{KMvdG-2019}.

These results are applicable to the NISQ setting, \emph{i.e.},~the near-term prospect of hardware platforms in which all or nearly all of the qubits will store data which ideally is not to be lost or disturbed owing to the scarcity of memory resources.
They give rise to unitary circuits, whose depth must scale at least as the distance between the pair of qubits on which we want to perform a two-qubit operation.

It seems plausible, for the parity-map techniques of Ref.~\cite{KMvdG-2019}, these techniques will in some cases yield something that could be interpreted in terms of linear network codes; this may allow their techniques for finding suitable CNOT circuits in the NISQ setting, to be combined with our techniques for distributing entanglement in a setting where memory is less scarce.

\subsubsection{Sequential distribution of Bell pairs}

Our approach is to consider how multiple Bell pairs may be distributed through an quantum hardware platform in spite of ``bottlenecks'' in the network of the architecture, in a way that is independent of the distance between the qubits to be entangled.
Note that individual Bell pairs can be distributed in constant depth as well, by taking advantage of the concept of entanglement swapping (a~concept which implicitly underlies our techniques as well).

In (otherwise idealised) quantum hardware with paralellisable two-qubit interactions limited to a connected, undirected network $G$, we may distribute entanglement between any pair of qubits $q$ and $q'$ by first preparing a long chain of entangled qubits, and ``measuring out'' all intermediate qubits (essentially using what we call  ``qubit termination'' above), in constant time.
It suffices to consider a chain $q_0, q_1, \ldots, q_\ell$ of qubits with $q$ and $q'$ as endpoints, and to perform the following:
\begin{enumerate}[itemsep=0ex]
\item
    Prepare every $q_j$ for $j$ even in the state $\ket{+}$, and the remaining qubits in the state $\ket{0}$.
\item
    Perform a CNOT from qubit $q_j$ to qubit $q_{j{-}1}$ for each even $j > 0$.
\item
    Perform a CNOT from qubit $q_j$ to qubit $q_{j{+}1}$ for each even $j < \ell$.
\item
    Measure the $X$ observable on each $q_j$ for $0 \!<\! j \!<\! \ell$ even (recording the outcome $s_j = \pm 1$); and measure the $Z$ observable on each $q_j$ for $j \!<\! \ell$ odd (discarding the outcome and assigning $s_j = +1$).
\item
    If $\prod_j s_j = -1$, perform a $Z$ operation on either $q_0$ or $q_\ell$ (not both).
\end{enumerate}
The value of the product $\prod_j s_j$ can be evaluated by a simple circuit of depth $O(\log \ell)$, and only determines the final single-qubit operation which determines whether the prepared state is $\ket{\Phi^+}$ or $\ket{\Phi^-}$ on $\{q,q'\}$; the rest of the procedure is evidently realisable by a quantum circuit with a small depth, independent of $\ell$. 

To distribute Bell states between $k$ pairs of qubits, it clearly suffices to perform the above procedure $k$ times in sequence, independently of whether the chains involved cross one another.
(Furthermore, any pairs of qubits whose chains do not cross in $G$ can be processed in parallel.)
As the final corrections can be performed in parallel, the total depth of this procedure is then at most $4k+1$, regardless of the distance between the nodes or the size of $G$.

One of our main results (Theorem~\ref{thm:constdepth} on page~\pageref{thm:constdepth}) is to demonstrate conditions under which we may use a QLNC circuit to simulate a classical linear network coding protocol, in ``essentially constant'' depth --- that is, independent of the size of the network or the distance between transmitters and receivers.
Thus, for sufficiently large $k$, our techniques will distribute the same entangled states in parallel, with a lower depth of quantum operations than distributing the same entanglement sequentially.

\subsubsection{Distribution of entanglement via graph states}

Our techniques yield results that are in some ways similar to results involving graph states~\cite{graphstate}.
We describe some of these here.

In the work by de~Beaudrap and Roetteler~\cite{deBeaudrap2014}, linear network codes give rise to measurement-based procedures involving graph states (which differ from, but are in some cases very similar to, the coding network itself).
The connection to measurement-based quantum computing informed our results, and in particular our techniques feature both measurements and the depth-reduction for which measurement-based computing is known.
However, as our results rest upon unitary operations performed on a network in which each node is a single qubit, the results of Ref.~\cite{deBeaudrap2014} do not directly apply.

More intriguingly, Hahn~\textit{et al.}~\cite{Hahn} have shown how entanglement can be ``routed'' from an initial graph state using transformations of graph states by local complementations.
Graph states can be prepared in depth equal to the edge-chromatic number of the graph (\emph{i.e.},~as with our results, with depth independent of the size of the distances between the qubits involved).
In this sense they represent a better-known way to address the problem of shallow-depth multi-party entanglement distribution in restricted architectures.
Our results differ from those of Hahn~\textit{et al.}~\cite{Hahn} in that we are able to avoid using the sophisticated technique of local complementation of graph states, instead reducing the problem of entanglement distribution to the somewhat more easily grasped subject of linear network coding, which has also been well-studied in the context of information technologies.
There are also entanglement distribution tasks which cannot be achieved by local transformations of graph states, which can be achieved through our techniques: see Section~\ref{sec:gphState-separatingExample}.

\section{Quantum Linear Network Coding circuits}
\label{prelim}

In this Section, we describe techniques to distribute entanglement in architectures where the pairs of qubits which can interact are restricted to some graph $G$.
Our results involve stabiliser circuits which in a sense simulate a linear network coding protocol on $G$ in order to distribute entanglement, given that the ``nodes'' are single qubits and the ``channels'' consist just of whether or not a CNOT operation is applied.
For this reason, we call these circuits \emph{quantum linear network coding circuits} --- or henceforth, QLNC circuits.

We demonstrate below how to simulate a particular classical linear network code using a QLNC circuit, and how doing so can be used to distribute Bell states in parallel by reducing this task to the $k$-pairs problem.
More generally, we show that the same techniques may be used to distribute GHZ states of various sizes by reducing this task to the multiple multicast problem.
We also demonstrate the way in which QLNC circuits allow us to find solutions which somewhat extend what can be achieved by reduction to the $k$-pairs or multiple multicast problems.
To help this exposition, we introduce a formalism to describe the effect of QLNC circuits as a class of quantum circuits, independent of the application of entanglement distribution.

\subsection{A first sketch of QLNC circuits}

Consider a network $G$ with $k$ transmitters $T = \{t_1, \ldots, t_k\}$ and $k$ receivers $R = \{r_1, \ldots, r_k\}$, where we wish to distribute a Bell pair $\ket{\Phi^+}$ between each pair $(t_j, r_j)$, $1 \le j \le k$.
The simplest application of our techniques is to reduce this problem to the existence of a linear network coding solution to the corresponding $k$ pairs problem on $G$, which we may describe by a subgraph $G'$ (omitting edges not required by the protocol) whose edges are given directions by the coding protocol.\footnote{%
    Note that this is not an easy problem in general: see Section~\ref{comp}. 
}
In particular, our results apply to linear network codes in which, specifically, all nodes with output channels send the same message (consisting of the sum modulo~2 of its inputs) on each of its output channels.

We suppose that classical information may be transmitted freely, without being constrained to the network.
While there will be non-trivial costs associated with communicating and computing with classical information, it is reasonable to suppose that the control system(s) governing the quantum architecture can perform such tasks, without being subject to the restrictions involved in the interactions between qubits.

\subsubsection{Directly simulating classical linear network codes}

Given a linear network code as above, to send a standard basis state from each transmitter to their respective receiver would be straightforward, using a circuit of CNOT gates to simulate the network code.
It would suffice to simply initialise all qubits to $\ket 0$, and at each node, compute the message that the node should transmit by using CNOT gates (oriented along the directed edge) to compute the parity of its incoming message(s) at the corresponding qubit. Fig.~\ref{f001a} illustrates this in the case of the Butterfly network.

\begin{figure}[!t]
	\centering
	\includegraphics[width=0.66\linewidth]{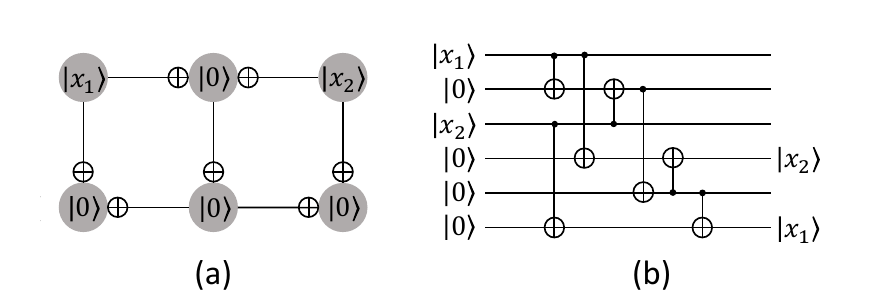}
	\captionsetup{width=0.95\linewidth}
	\caption{\small{Sending computational basis states $x_1$ and $x_2$ over a butterfly network in which each vertex is a qubit, and each edge in a CNOT gate, shown in (a) -- the order in which the CNOT gates are performed is given in the circuit, shown in (b).}}
	\label{f001a}
\end{figure}

To transmit  Bell  pairs, requires additional operations: if the  qubits at the transmitter nodes do not initially start in the standard basis, the procedure described above will yield states in which the transmitters and receivers are entangled with the intermediate nodes.
This elaborated procedure is illustrated in Fig.~\ref{f001}. Following Refs.~\cite{Kobayashi2009, Kobayashi2011,deBeaudrap2014}, we adapt classical network coding protocols by preparing the transmitter states in the $\ket{+}$ state (conceived of as a uniform superposition over standard basis states), and performing $X$ observable measurements (\emph{i.e.},~measurements in the  $\ket{\pm}$ or ``Fourier'' basis) to disentangle the intermediary qubits while leaving them in (joint) superpositions of the standard basis.
These measurements yield outcomes $\pm 1$.
The $+1$ outcome represents a successful disentangling operation, erasing any local distinctions between possible standard basis states without introducing any relative phases.
The $-1$ outcome represents a disentangling operation requiring further work, as a relative phase has been introduced between the possible standard basis states locally.
By conceiving of the state of the qubit as being the parity of some (undetermined) bit-values originating at the transmitters, one may show that it is possible to correct the induced phase by performing $Z$ operations on an (easily determined) subset of the transmitters or receivers.
We refer to this procedure, of measuring a qubit with the $X$ observable and performing appropriate $Z$ corrections, as \emph{termination} of the qubit.
By considering the state of the qubits in Fig.~\ref{f001}(b) after the Hadamard gates simply as a superposition $\tfrac{1}{2} \sum_{a,b} \ket{a,0,b,0,0,0}$ for $a,b \in \{0,1\}$, it is easy to show that the final state after the measurements and classically controlled $Z$ operations is $\tfrac{1}{2} \sum_{a,b} \ket{a,\!\:\cdot\!\;,b,b,\!\:\cdot\!\;,a} = \ket{\Phi^+}_{1,6} \ket{\Phi^+}_{3,4}$, using dots as place-holders for the measured qubits $2$ and $5$.

\begin{figure}[!t]
	\centering
	\includegraphics[width=0.66\linewidth]{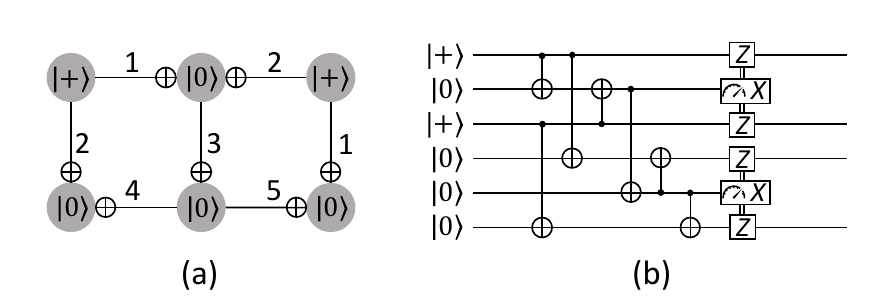}
	\captionsetup{width=0.95\linewidth}
	\caption{\small{Example of performing the Butterfly on a single qubit: (a) shows the order of edges; (b) shows the quantum circuit diagram -- note that the final two layers consisting of Hadamard gates and measurements on qubits 2 and 5, and classically controlled Pauli-$Z$ gates on the other four qubits are necessary for the `termination' of qubits 2 and 5, which do not appear in the final desired entangled state. We discuss the general process of termination in full in due course.}}
	\label{f001}
\end{figure}

\subsubsection{Simulating classical linear network codes ``out of order''}
\label{outoforder}

For the application of distributing entanglement, QLNC circuits may simulate linear network coding protocols in other ways than sequential evaluation.
As a fixed entangled state represents a non-local correlation rather than information as such, it suffices to perform operations which establish the necessary correlations between the involved parties.
This principle applies to the simulation of the network coding protocol itself, as well as to the eventual outcome of the entanglement distribution procedure.
For instance: the role of a node with exactly one output channel in our setting is to establish (for each possible standard basis state) a specific correlation between the parities of the qubits of the nodes which are adjacent to it: specifically, that the total parity should be zero.
These correlations may be established without simulating the transmissions of the classical network code in their usual order.

Fig.~\ref{f1} illustrates a mild example of how a QLNC circuit may simulate a classical network protocol (again on the Butterfly network), performing the operations ``out of order''.
\begin{figure}[!t]
	\centering
	\includegraphics[width=0.66\linewidth]{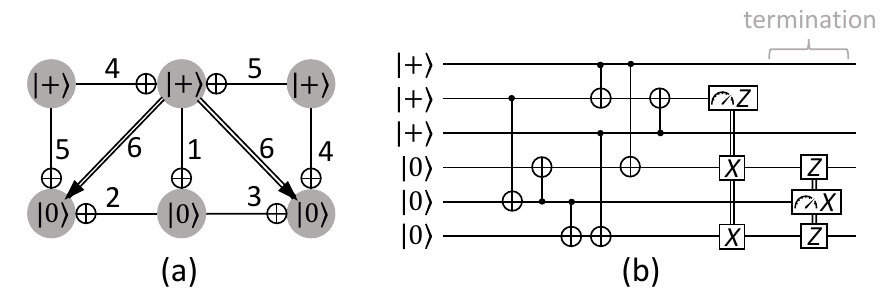}
	\captionsetup{width=0.95\linewidth}
	\caption{\small{The Butterfly performed out of order, as illustrated graphically in (a), with the measurement of qubit 2 performed immediately prior to the classical control; (b) shows the corresponding quantum circuit, and exhibits a good example of the termination process, as described in detail later on.}}
	\label{f1}
\end{figure}
In this case, the correlation between the values of the qubits $1$, $3$, and $5$ (that their projections onto the standard basis should have even total parity, before the disentangling measurement on $5$) is established by attempting to project the qubit $2$ onto the state $\ket{0}$, via a $Z$ observable measurement. In the case that the outcome is instead $\ket{1}$, we must correct any receiver nodes which would be affected by this, by performing (classically conditioned) $X$ operations (represented by the doubled operations, and performed at the sixth time-step).
Again, by considering the state of the qubits in Fig.~\ref{f1}(b) after the Hadamard gates simply as a superposition  $\smash{\tfrac{1}{2\sqrt 2}} \sum_{a,b,z} \ket{a,z,b,0,0,0}$ for $a,b,z \in \{0,1\}$, it is easy to show that the state immediately prior to the measurement of qubit $2$ is $\smash{\tfrac{1}{2\sqrt 2}} \sum_{a,b,z} \ket{a,(z{\oplus}a{\oplus}b),b,(a{\oplus}z),z,(b{\oplus}z)}$, and that projecting qubit $2$ onto the state $\ket{0}$ projects onto those terms for which $z = a \oplus b$.
(Projection onto $1$ projects onto those terms for which $z \oplus 1 = a \oplus b$, and we may correct for this simply by performing an $X$ operation on each receiver whose state depends on the index $z$.)
It is then easy to verify, as with Fig.~\ref{f001}, that the resulting circuit prepares the state $\ket{\Phi^+}_{1,6} \ket{\Phi^+}_{3,4}$. 

One insight is that the freedom to communicate classical information outside of the network allows QLNC circuits to represent a linear network code on a larger network than the network $G$ which governs the two-qubit interactions --- with the qubits as nodes, and both the CNOT~gates\,/\, classically controlled $X$ gates as directed edges.
We will formalise this insight in Section~\ref{main}.

\subsubsection{A separation between QLNC circuits and local transformations of graph states}
\label{sec:gphState-separatingExample}

There are entanglement distribution tasks which can be achieved using QLNC circuits, which cannot be achieved using local transformations of graph states.
Fig.~\ref{novfigref} demonstrates a QLNC circuit on a simple network, whose effect is to prepare a four-qubit GHZ state on the nodes of degree~1.
\begin{figure}[!t]
	\centering
	\includegraphics[width=0.66\linewidth]{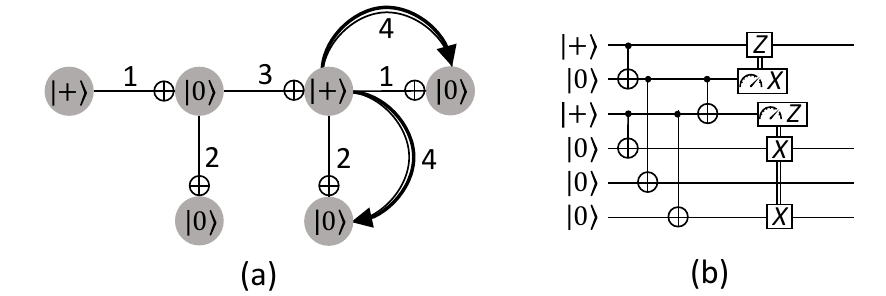}
	\captionsetup{width=0.95\linewidth}
	\caption{\small{An example of an entanglement distribution task separating QLNC circuits from local transformations of graph states.
	The qubits are numbered 1,2,3,4 (left to right) along the top row, and 5, 6 (left to right) along the bottom row.
	Qubit 2 is terminated and qubit 3 is measured (followed by the required classical correction), leaving a four-qubit GHZ state.}}
	\label{novfigref}
\end{figure}
An exhaustive search of the local complementation orbit, including measurements, revealed that the four-qubit GHZ state could not be reached by local Clifford operations and measurements if a graph-state was prepared over the same graph. (We provide the code for this exhaustive search~\cite{ourcode}, which was written specifically for this example but could in principle be adapted for any single network).

While we do not make any formal claim to this effect, the existence of this example leads us to believe that our techniques may yield solutions for entanglement distribution in larger-scale networks and for a variety of entanglement distribution tasks, where it may be difficult or impossible to find a procedure to do so by manipulation of graph states.


\subsection{The QLNC formalism}
\label{main}
Our main objective is to demonstrate how to simulate a classical linear network code to solve a multiple multicast problem on a network $G$, using a QLNC circuit of constant depth, to distribute Bell states and GHZ states between disjoint subsets of qubits located at the nodes of an architecture whose interaction graph is the same network $G$.
To do so, it will be helpful to introduce a simulation technique (which we call the ``QLNC formalism'') to describe the evolution of a set of qubits in a QLNC circuit.

QLNC circuits are stabiliser circuits, by construction.
Indeed, as the only operations which they involve are preparations of $\ket{0}$ and $\ket{+}$ states, CNOT gates, $X$ and $Z$ observable measurements, and unitary $X$ and $Z$ gates conditioned on measurement outcomes, they do not even generate the Clifford group.
For this reason, one might consider using the stabiliser formalism to simulate a QLNC circuit.
(Indeed, the QLNC formalism described below uses operations similar to those of the simulating stabiliser formalism, in that they involve transformations and row-reduction of matrices over $\mathbb Z_2$.) 
The QLNC formalism differs from the stabiliser formalism by expressing states implicitly as superpositions over standard basis states, essentially as a special case of that of Dehaene and de Moor~\cite[Theorem 5~(ii)]{dehaene}.
This renders certain features of the correlations between states immediately obvious: \emph{e.g.},~not even a small amount of pre-processing (such as that required by the stabiliser formalism) is needed to establish the state of any single-qubit state which factorises from the rest.
This alternative representation more transparently represents the qualities of the state which are important to simulate network coding: for this reason, it proves to be a somewhat more efficient method than the stabiliser formalism for this purpose.

\subsubsection{Parity formula states}

In the QLNC formalism, the global state is represented by an assignment of boolean formulae $f_j(a)$, where $a = (a_1,a_2,\ldots,a_N)$ to each qubit $1 \le j \le n$ in the network $G$.
We call each formula $f_j(a)$ a \emph{node formula} or a \emph{qubit formula}.
Here,
\begin{equation}
  f_j(a)
 \,=\,
  c_{j,0} + c_{j,1} a_1 + \cdots + c_{j,N} a_N\,,
\end{equation}
for some explicit coefficients $c_{j,0},\!\; c_{j,1},\!\;  \ldots,\!\; c_{j,N} \in \{0,1\}$, and where addition is taken modulo $2$ (\emph{i.e.},~each function $f_j(a)$ computes the parity of $c_{j,0}$ and some given subset of its arguments).
These arguments consist of some number of formal indeterminates $a_1, \ldots, a_N$, which we may interpret as variables which may take Boolean values but where those values are as yet undetermined.
We require that, together the the vector $\mathbf e_0 = {[1\:\:0\:\:\cdots\:\:0]\trans}$, the vectors $\{{\mathbf c}_1, {\mathbf c}_2, \ldots, {\mathbf c}_n\} \subseteq \mathbb Z_2^{N+1}$ for ${{\mathbf c}_j \!\!\;= [c_{j,0}\:\:c_{j,1}\:\:\cdots\:\:c_{j,N}]\trans}$ span a set of $2^{N+1}$ vectors.
In particular, each indeterminate $a_h$ must occur in \emph{some} qubit formula $f_j(a)$.
The state also has an associated phase formula $\phi(a)$ of the form
\begin{equation}
  \phi \,=\, p_{0} + p_{1} a_1 + \cdots + p_{N} a_N\,. 
\end{equation}
Given such a phase formula $\varsigma$ and node-formulas $f_1, f_2, \ldots, f_n$ for a network $G$ of $n$ nodes, the global state of the system is given by
\begin{equation}
  \label{eqn:parityFormulaExpan}
    \frac{1}{\sqrt{2^N}} \!
      \sum_{x \in \{0,1\}^N} \!\!
        (-1)^{\phi(x)}\,
      \ket{f_1(x)} \otimes \ket{f_2(x)} \otimes \cdots \otimes \ket{f_n(x)}
\end{equation}~\\[-2ex]
where $x = (x_1, x_2, \ldots, x_N)$.
That is: the phase formula $\phi(a)$ and node-formulae $f_j(a)$ stand for an explicit  superposition over the standard basis, ranging over all possible substitutions of Boolean strings $x \in \{0,1\}^N$ to the indeterminates $a_1, \ldots, a_N$, and where in particular $\phi(a)$ determines the relative phases.

\begin{defn}
A \emph{parity formula state} is an $n$-qubit state for $n \ge 1$ as expressed in \eqref{eqn:parityFormulaExpan}, where $\phi$ and $f_j$ for $1 \le j \le n$ are (not necessarily homogeneous) linear functions of $N \ge 0$ indeterminates, and where the functions $f_j(a)$ together with the constant function $e_0(a) = 1$ span a set of $2^{N+1}$ functions.
\end{defn}

It will be convenient to consider a representation of parity function states in terms of an $(N+1) \times (n+1)$ matrix $C$ and a separate column vector $\mathbf p$, where $\mathbf p = {[p_0\:\:p_1\:\:\cdots\:\:p_N]\trans}$, and where the columns of $C$ (indexed from $0$ to $n$) consist of the vector $\mathbf e_0$ and the columns $\mathbf c_1, \ldots , \mathbf c_{n}$.

\begin{defn}
    A parity function matrix $C$ for an $n$-qubit state is an $(N{+}1)\times(n{+}1)$ matrix for some $N \ge 0$, of the form $C = \bigl[\mathbf e_0\:\:\mathbf c_1\:\:\cdots\:\:\mathbf c_{n+1}\bigr]$ of rank $N+1$.
    A parity function tableau is a matrix $T = \bigl[\:\! C\,\big\vert\, \mathbf p  \:\!\bigr]$ consisting of a parity function matrix $C$ and a phase vector $\mathbf p$.
\end{defn}
Two distinct parity function tableaus $T = \bigl[\:\! C\,\big\vert\, \mathbf p  \:\!\bigr]$ and $T' = \bigl[\:\! C'\,\big\vert\, \mathbf p' \:\!\bigr]$ may represent the same state, if $T' = Q T$ for some $(N{+}1) \times (N{+}1)$ invertible matrix $Q$.
Such a transformation $Q$ represents a change of variables, in the summation expression of the state as described in \eqref{eqn:parityFormulaExpan}, leaving the overall sum invariant.
Note that such a matrix must satisfy $Q \mathbf e_0 = \mathbf e_0$: this corresponds to the fact that no change of variables can affect the value of constants.
Conversely, any invertible $(N{+}1) \times (N{+}1)$ matrix $Q$ which preserves the vector $\mathbf e_0 \in \mathbb Z_2^{N+1}$\!, may be used to transform a parity function tableau $T$ to an equivalent tableau (representing the same state) by left-multiplication.

In our application to QLNC circuits for a given qubit interaction network $G$, we may use an alternative representation, in which we write the qubit functions $f_j(a)$ next to the nodes corresponding to each qubit $j$ in the diagram of $G$.
For instance, the state illustrated in Fig.~\ref{f4} is the state $\ket{+}_1 \ket{+}_3 \ket{\mathrm{GHZ}_4}_{2,4,5,6}$ (with a phase function of zero).
This will prove practical when the objective is to demonstrate the effect of operations within a particular network $G$.

\subsubsection{QLNC operations on parity formula states}

We now consider how each of the transformations which are admitted in QLNC circuits may be simulated through transformations of parity function tableaus.

\paragraph{Simulating unitary gates.}

The effect of the unitary transformations CNOT, $X$, and $Z$ on parity formula states are easy to describe as transformations of their representations, by simply reasoning about the representation of the state as a superposition over the standard basis:
\begin{enumerate}[label=(\textit{\roman*})]
\item
  The effect of an $X$ operation on qubit $k$ is to update $f_k(a) \gets 1 + f_k(a)$;
\item
  The effect of a $Z$ operation on qubit $k$ is to update $\phi(a) \gets \phi(a) + f_k(a)$;
\item
  The effect of a CNOT operation with control $k$ and target $\ell$, is to update $f_\ell(a) \gets f_\ell(a) + f_k(a)$.
\end{enumerate}
It is easy to verify that these transformations correspond to elementary column transformations of the parity function tableau ${\bigl[\:\!C\:\!\big\vert\:\!\mathbf p\:\!\bigr]}$.
Specifically --- indexing the columns of $C$ from $0$ --- these operations may be realised respectively by adding the zeroeth column of $C$ to the $k\textsuperscript{th}$ column, adding the $k\textsuperscript{th}$ column of $C$ to $\mathbf p$, and adding the $k\textsuperscript{th}$ column of $C$ to the $\ell\textsuperscript{th}$ column.
Note that these operations all preserve the rank of $C$.

\paragraph{Simulating projective measurements.}

The way in which we may represent measurements by transformations of a parity formula tableau is somewhat more complex, due to the possibility of state collapse.
To simplify the description of an $X$ or $Z$ observable measurement on a qubit $k$, we first perform a change of variables --- specifically, by putting the block matrix $T = {\bigl[\:\! C \:\!\big\vert\:\! \mathbf p \:\!\bigr]}$ in a reduced row-echelon form in which either column $k$ is a pivot column, or column $k$ is co-linear with $\mathbf e_0$ (so that $f_k(a)$ is a constant).
Suppose (without loss of generality) that ${\bigl[\:\! C \:\!\big\vert\:\! \mathbf p\:\!\bigr]}$ is already in such a reduced row echelon form, in which case either $f_k(a) = c_{k,0}$ is a constant function, or $f_k(a) = a_g$ for a single indeterminate indexed by $1 \le g \le N$; in the latter case, exactly one row of $C$ contains a $1$ in the $k\textsuperscript{th}$ column.
Having put the parity function tableau into reduced row-echelon form of this kind, we may then describe an $X$ or $Z$ observable measurement on qubit $k$, as follows.
\begin{enumerate}[label=(\textit{\roman*}), start=4]
\item
  For an $X$ measurement:
  \begin{enumerate}[label=({\alph*})]
  \item
    \label{item:XmeasProd}
    If $f_k(a) = a_g$ for an indeterminate which does not occur in any other qubit formula $f_j(a)$ --- \emph{i.e.},~if there is a single $1$ in the $g\textsuperscript{th}$ row of $C$ --- 
    then the state is unchanged by the measurement, and the measurement outcome is $s = (-1)^{p_g}$.
  \item
    \label{item:XmeasEnt}
    Otherwise, let $z_{N\!\!\:{+}\!\!\:1}$ be a new indeterminate (represented in $C$ by adding a new row at the bottom), and choose a measurement outcome $s = \pm 1$ uniformly at random.
    If $f_k(a)$ is constant prior to the measurement, then let $\Delta$ be the $(N{+}2)$-dimensional vector with a single $1$ in the final row; otherwise, let $\Delta$ be the $(N{+}2)$-dimensional column-vector with exactly two $1$s, in row $g$ and $N{+}1$ (counting from~$0$).
    We then add $\Delta$ to the $k\textsuperscript{th}$ column of $C$, and (in the case that $s = -1$) to $\mathbf p$ as well.
  \end{enumerate}
\end{enumerate}
  
\begin{itemize}
\item
  \textbf{Analysis of the state transformation.~~}
  In case~\ref{item:XmeasProd}, the state of qubit $k$ can be factored out of the sum, so that the state is either $\ket{+}$ (if $\phi$ lacks any $a_g$ term) or $\ket{-}$ (if $\phi$ contains a $z_g$ term), so that the measurement does not affect the state and the outcome is predetermined.
  Otherwise, in case~\ref{item:XmeasEnt}, qubit $k$ is maximally entangled with the rest of the system: the state has a Schmidt decomposition $\tfrac{1}{\sqrt 2} \ket{0}_k  \ket{A_0} + \tfrac{1}{\sqrt 2} \ket{1}_k \ket{A_1}$, where $\ket{A_b}$ in each case is the state-vector on the qubits apart from $k$ in the case that $a_g := b$ (possibly including a phase factor that depends on $a_g$).
  It follows that the outcome of the $X$ observable measurement is uniformly random, and that the state $\ket{A_\pm}$ of all of the other qubits will be in tensor product with $k$ after measurement.
  A straightforward calculation shows that $\ket{A_+} = \tfrac{1}{\sqrt 2} \sum_{x_g} \ket{\smash{A_{x_g}}}$ and $\ket{A_-} = \tfrac{1}{\sqrt 2} \sum_{x_g} (-1)^{x_g} \ket{\smash{A_{x_g}}}$; these are the states described by simply omitting the $k\textsuperscript{th}$ column of $C$, and (in the case of $\ket{A_-}$) adding an extra $a_g$ term to the phase function.
  To represent the post-measurement state, it suffices to introduce a new indeterminate $a_{N+1}$ to represent the independent superposition on qubit $k$; for the post-measurement state $\ket{-}_k$, we also must add $a_{N+1}$ to the phase function.

\item
  \textbf{On the rank of the resulting parity function matrix.~~}
  Note, above, that in case~\ref{item:XmeasProd} there is no change in the tableau, and thus no change in the rank of $C$.
  In case~\ref{item:XmeasEnt}, we must consider two sub-cases: one where $f_k(a) = c_{k,0}$ before the measurement, and one where $f_k(a) = a_{g}$ before the measurement.
  In either case, we add one row, in which the only non-zero entry is in column $k$.
  In the former case, we add one row and add a coefficient $1$ in column $k$ in that bottom row.
  This increases both the number of rows and the rank.
  In the latter case, we consider the operations performed at column $k$ in two steps: first setting the coefficient at row $g$ to zero, then setting the coefficient in the new row $N+1$ to one.
  Setting the coefficient at row $g$ to zero does not decrease the rank: the column $k$ cannot any longer be a pivot column.
  Prior to the first step, the $k\textsuperscript{th}$ column is a pivot column; but we may alternatively select any other column in which the $g\textsuperscript{th}$ row is set to $1$, as (by construction) these columns do not contain a pivot position for any other row.
  Thus, setting the $g\textsuperscript{th}$ coefficient of the $k\textsuperscript{th}$ row does not decrease the rank; and again, adding a row in which only the $k\textsuperscript{th}$ column has a $1$ increases both the rank and the number of columns.
  Thus, this operation maintains the property of $C$ having a rank equal to the number of its rows.
\end{itemize}

\begin{enumerate}[label=(\textit{\roman*}), start=5]
\item
  For a $Z$ measurement:
  \begin{enumerate}[label=({\alph*})]
  \item
    \label{item:ZmeasProd}
    If $f_k(a) = c$ is a constant function, then the measurement leaves the state unchanged, and the measurement outcome is $(-1)^c$.
  \item
    \label{item:ZmeasEnt}
    Otherwise, we select a measurement outcome $s = (-1)^b$ for a bit $b \in \{0,1\}$ chosen uniformly at random.
    Let $\Delta = b \mathbf e_0 + \mathbf c_k$.
    Add $\Delta$ to all columns of $T = {\bigl[\:\!C\:\!\big\vert\:\!\mathbf p\:\!\bigr]}$ which contain a $1$ in the $g\textsuperscript{th}$ row (including the $k\textsuperscript{th}$ column itself), and remove the row $g$ entirely from the tableau.
  \end{enumerate}
\end{enumerate}
  
\begin{itemize}
\item
  \textbf{Analysis of the state transformation.~~}
  In case~\ref{item:ZmeasProd}, it is obvious that qubit $k$ is in a fixed state: the outcome will be $+1$ if it is in the state $\ket{0}$, and $-1$ if it is in the state $\ket{1}$.
  Otherwise, in case~\ref{item:ZmeasEnt}, the state of the system can again be described as a superposition $\tfrac{1}{\sqrt 2} \ket{0}_k  \ket{A_0} + \tfrac{1}{\sqrt 2} \ket{1}_k \ket{A_1}$, albeit where it is possible in principle that $\ket{A_0} = \pm  \ket{A_1}$.
  We may simulate the assignment of the $k\textsuperscript{th}$ qubit to $b$ by quotienting out all of the functions $f_j(a)$ and the phase function $\phi(a)$ by the relation $a_g + b = 0$.
  We may do this in effect by adding the column vector $\Delta$ defined above to all columns with a non-zero coefficient in the row $g$, thereby obtaining a tableau in which the $g\textsuperscript{th}$ row is empty.
  This corresponds to a state in which the variable $a_g$ no longer plays any role; together with the updated normalisation after measurement, we may represent this by removing row $g$.
    
\item
  \textbf{On the rank of the resulting parity function matrix.~~}
  Note, above, that in case~\ref{item:ZmeasProd} there is no change in the tableau, and thus no change in the rank of $C$.
  In case~\ref{item:ZmeasEnt}, we may without loss of generality suppose that the $k\textsuperscript{th}$ column is the last column to which we add $\Delta$.
  In each case, the vector is added to a non-pivot column, in which case this cannot decrease the rank; nor will it increase the rank, as it only sets coefficients to $0$ in rows which already have pivot positions.
  These column additions preserve the property of being a reduced row-echelon form.
  The final addition of $\Delta$ does decrease the rank by $1$, as it turns the $g\textsuperscript{th}$ row from a non-zero row-vector (in a reduced echelon form) to a zero row.
  Thus the rank of the parity function matrix $C$ decreases by $1$; as removing row $g$ from the tableau reduces the number of columns by $1$, this operation maintains the property of $C$ having a rank equal to the number of its rows.
\end{itemize}
From the above, we see that we may represent QLNC operations by simple transformations of a parity function tableau $T = {\bigl[\:\!C\:\!\big\vert\:\!\mathbf p\:\!\bigr]}$, which in particular preserves an invariant that the rank of the parity function matrix $C$ is equal to the number of its rows.

\paragraph{Simulating destructive measurements and qubit preparations.}

One might reasonably wish to regard some measurements as being \emph{destructive}, \emph{i.e.},~in not leaving any post-measurement state.
We may simulate this by simply removing from $C$ the column corresponding to the destructively measured qubit, and removing from the entire tableau any row for which (after the column removal) the matrix $C$ is entirely zero.
Conversely, one may simulate the preparation of a fresh qubit in a standard basis state $\ket{b}$ for $b \in \{0,1\}$, by inserting a new column into $C$ with the value $b \mathbf e_0$.
To instead simulate the introduction of a fresh qubit in the state $\tfrac{1}{\sqrt 2} \sum_{x'} (-1)^{bx'} \ket{x'}$ for $b \in \{0,1\}$, one may insert a new row into the tableau (at the bottom of both $C$ and $\mathbf p$) which is entirely zero, then setting the new coefficient of $\mathbf p$ in this row to $b$ if this is different from $0$, and then inserting a new column into $C$ which has only a single $1$ in the final row.

\vspace*{-1ex}
\paragraph{Terminology.} For the sake of definiteness, ``the QLNC formalism'' will refer below to the techniques described above to describe transformations of parity function tableaus (or some some equivalent representation), as a means to simulate stabiliser circuits composed of this limited set of operations.

\subsubsection{Depicting and simulating QLNC circuits}
\label{sim}

Having defined the QLNC formalism, we now demonstrate how it may be used to simulate QLNC circuits.
In this context, we will prefer to represent the parity function states diagrammatically rather than as a matrix --- and to represent it together with a visual representation of the transformations to be performed.
\begin{enumerate}
\item
    Each vertex is a qubit $j$, with an associated formula $f_j(a)$, for some symbols $a_1, \ldots, a_N$.
    The initial formulae for each qubit is generally very simple: each qubit prepared in the $\ket{+}$ state is assigned the formula $f_j(a) = a_j$ for a unique formal symbol $a_j$, and each qubit initialised in the $\ket{0}$ state is assigned the formula $f_j(a) = 0$.
\item
    Pauli $X$ gates on a qubit $k$, which are classically conditioned by the outcome of a $Z$-observable measurement of a different qubit $j$, are represented as doubled lines with an orientation from $j$ to $k$.
    Coherently controlled CNOT gates are drawn along edges of the network $G$.
\item
    One or more qubits may also be simultaneously ``terminated'', in which case they are measured with the $X$ observable.
    The outcome may then be used to control Pauli $Z$ operations to cancel out the relative phases which arise as a result of any $-1$ measurement outcomes.
\item
    There is a time-ordering of the operations represented by the edges are performed.
    In simple QLNC circuits, this is represented by a single integer at each edge, and an integer inside each node to be terminated.
    (Two edges which meet at a common vertex, and which are not both classically controlled $X$ gates, must be performed at different times, and thus must be assigned different numbers. Also, no edge can have the same number as the termination of a node to which it is incident. Otherwise, there are no constraints.)
    More generally, it will be reasonable to consider QLNC circuits in which edges are used some constant number of times, \emph{e.g.}~up to two times; we would then label edges by a list (or set) of those times in which it is used, and the operations involving a common vertex must be disjoint (again, unless those operations are all classically controlled $X$ gates).
\end{enumerate}
\paragraph{Remarks on termination.}
It may not be immediately obvious that the claim made about termination --- that any relative phases induced by obtaining measurement outcomes of $-1$ from $X$ observable measurements --- can be ``undone'', leaving a state which is a uniform superposition over some set of standard basis states (\emph{i.e.},~with no relative phases at all).
In the case of a QLNC circuit which (successfully) simulates a classical linear network code, this may be more plausible to the reader.
In fact, we make a stronger claim:
\begin{lemma}
    For any state $\ket{\psi}$ given by a parity function tableau $T = {\bigl[\:\! C \:\!\big\vert\!\: \mathbf p \!\:\bigr]}$ on $n$ qubits, it is possible (in time dominated by Gaussian elimination on $T$) to find a subset $S \subset \{1,\ldots,n\}$ of qubits, such that $\ket{\psi'} = Z^{\otimes S} \ket{\psi}$ has a parity function tableau $T = {\bigl[\:\! C \:\!\big\vert\!\: \mathbf 0 \!\:\bigr]}$.
\end{lemma}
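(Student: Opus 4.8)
The plan is to exploit the simple way in which $Z$ operations act on a parity function tableau, reducing the whole problem to the solution of a single linear system over $\mathbb{Z}_2$. Recall from the rules for simulating unitary gates that applying $Z$ to qubit $k$ updates the phase formula by $\phi(a) \gets \phi(a) + f_k(a)$; that is, it adds the $k\textsuperscript{th}$ column $\mathbf{c}_k$ of $C$ to the phase vector $\mathbf{p}$, while leaving $C$ itself unchanged. Hence applying $Z^{\otimes S}$ keeps $C$ fixed and replaces $\mathbf{p}$ by $\mathbf{p} + \sum_{k \in S} \mathbf{c}_k$. To reach the tableau $[\,C \mid \mathbf{0}\,]$ it therefore suffices to find $S \subseteq \{1, \ldots, n\}$ with $\sum_{k \in S} \mathbf{c}_k = \mathbf{p}$; writing $\mathbf{s} \in \{0,1\}^n$ for the indicator vector of $S$ and $M = [\,\mathbf{c}_1\;\cdots\;\mathbf{c}_n\,]$ for $C$ with its zeroth column $\mathbf{e}_0$ deleted, this is precisely the $\mathbb{Z}_2$-linear system $M\mathbf{s} = \mathbf{p}$.

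First I would establish solvability. The defining spanning condition guarantees that $\{\mathbf{e}_0, \mathbf{c}_1, \ldots, \mathbf{c}_n\}$ spans $\mathbb{Z}_2^{N+1}$, so $\mathbf{p}$ is certainly a combination of $\mathbf{e}_0$ together with the $\mathbf{c}_k$. The only possible obstruction to $M\mathbf{s} = \mathbf{p}$ is thus a residual $\mathbf{e}_0$-component. To see that this is harmless I would project out the zeroth (constant) coordinate: since $\mathbf{e}_0$ maps to $\mathbf{0}$ under this projection, the images of the $\mathbf{c}_k$ alone span $\mathbb{Z}_2^{N}$, so the relative-phase coefficients $p_1, \ldots, p_N$ can always be cancelled by a suitable $S$. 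Any leftover $p_0$ contributes only an overall factor $(-1)^{p_0}$, an unobservable global phase, so up to this global phase the chosen $S$ yields $[\,C \mid \mathbf{0}\,]$.

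To actually produce $S$, and to obtain the claimed running time, I would row-reduce $T = [\,C \mid \mathbf{p}\,]$ to a reduced row-echelon form using left-multiplications $Q$ fixing $\mathbf{e}_0$ (exactly the change-of-variables transformations already permitted on tableaux). Because $Q\mathbf{e}_0 = \mathbf{e}_0$, the zeroth column stays equal to $\mathbf{e}_0$ and supplies the pivot for the zeroth row, so every other pivot lies in a genuine qubit column $k \ge 1$. Reading the reduced phase column, I would place qubit $\mathrm{piv}(i)$ into $S$ for each nonzero entry of $\mathbf{p}$ in a row $i \ge 1$; adding those columns clears exactly those entries, and the whole computation is dominated by the Gaussian elimination on $T$, as claimed.

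The step I expect to be the main obstacle is the treatment of the $\mathbf{e}_0$ row, where the constant part of the phase behaves differently from the indeterminate parts $a_1, \ldots, a_N$. Since $Z$ only ever adds a column $\mathbf{c}_k$ to $\mathbf{p}$, a residual $\mathbf{e}_0$-component cannot be removed when $\mathbf{e}_0 \notin \mathrm{span}\{\mathbf{c}_1, \ldots, \mathbf{c}_n\}$. The key realisation that unblocks the proof is that this residual term is a pure global phase, so cancelling the relative phases $p_1, \ldots, p_N$ — which the spanning condition always permits — is all that is required for the statement to hold, up to the usual global-phase equivalence of quantum states.
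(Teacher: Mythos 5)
Your proof is correct and follows essentially the same route as the paper's: row-reduce the tableau by a change of variables $Q$ fixing $\mathbf{e}_0$, observe that every row $g \ge 1$ then has its pivot in a genuine qubit column (so that qubit's transformed formula is the single indeterminate $a_g$), and apply $Z$ to exactly those pivot qubits whose rows carry a nonzero phase coefficient. Your explicit treatment of the residual constant term $p_0$ as an unobservable global phase is in fact slightly more careful than the paper's own proof, which tacitly assumes this component is absent when it asserts $\tilde{\mathbf{p}} \mapsto \mathbf{0}$.
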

\noindent
We prove this result here, to demonstrate that ``termination'' is a well-defined operation in principle.
\begin{proof}
    Let $Q$ be an invertible linear transformation for which $Q T = \bigl[\mathbf e_0 \:\: \tilde{\mathbf c}_1 \:\: \tilde{\mathbf c}_2 \:\: \cdots \:\: \tilde{\mathbf c}_n \:\: \tilde{\mathbf p}\bigr]$ is in reduced row-echelon form, and let $\tilde f_j$ be the qubit function corresponding to column $\tilde{\mathbf c}_j$.
    Then, for every formal indeterminate $z_g$, there is a qubit $k_g \in \{1,\ldots,n\}$ for which $f_{k_g} = z_g$.
    Let $J$ be the set of rows for which $\tilde p_j = 1$, and let $S = \{ k_j \mid j \in J \}$.
    Then the effect of $Z^{\otimes S}$ is to map $\tilde{\mathbf p} \mapsto \mathbf 0$.
    This may be represented by a transformation $R$ for which $QTR = \bigl[\mathbf e_0 \:\: \tilde{\mathbf c}_1 \:\: \tilde{\mathbf c}_2 \:\: \cdots \:\: \tilde{\mathbf c}_n \:\: {\mathbf 0}\bigr]$, which is a parity function tableau for a state without relative phases over the standard basis.
    (Indeed, it follows that the final column of $TR$ is also $\mathbf 0$, so that simulating $Z^{\otimes S}$ on the original tableau removes all relative phases without committing to the change of variables described by $Q$.)
\end{proof}
As a corollary, it follows that for a parity function state, we can induce whatever relative phase we like, of the form $(-1)^{\phi(x)}$ for any linear function $\phi$ of the indeterminates.
We may use this to justify the notion of ``terminating'' one qubit independently of any others, and ``undoing'' any change to the phase function which occurs as a result of obtaining a $-1$ outcome.
The specific choice of qubits on which to perform the $Z$ operations may not be unique, but it suffices for our purposes that such a set can always be found efficiently.

The way one might use the QLNC formalism to simulate a particular QLNC circuit is illustrated in Fig.~\ref{f4}.
This example distributes two Bell states across a rectangular grid, by simulating the classical Butterfly network protocol with some ``out-of-order'' evaluations.
To compensate for the out-of-order evaluation, classically controlled $X$ operations are required upon the measurement of one of the qubits: this is in effect a coding operation using a link outside of $G$, relying on the fact that classical information can be communicated more freely than quantum information can under our architectural assumptions.

\begin{figure}[!t]
	\centering
	\includegraphics[width=0.88\linewidth]{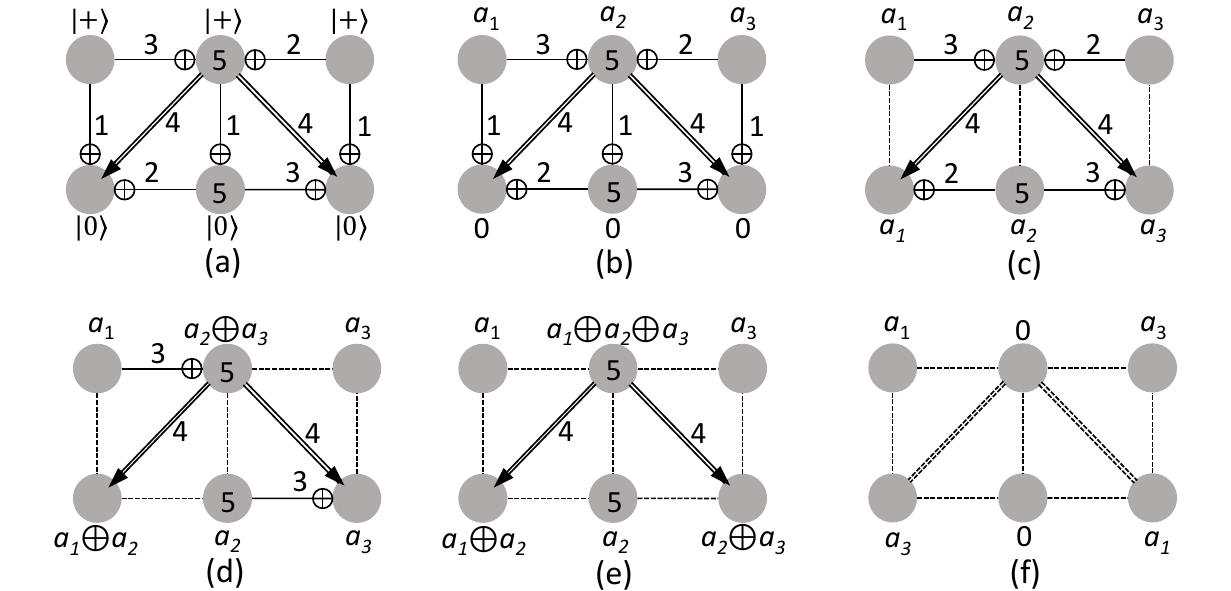}
	\captionsetup{width=0.95\linewidth}
	\caption{\small{Example of the out of order Butterfly: (a) the order of edges, slightly different, but with equivalent quantum circuit to that given in Fig.~\ref{f1}; (b) the initial labelling of the qubits; (c) the labels after edges ``1''; (d) the labels after edges ``2''; (e) the labels after edges ``3''; (f) the labels after the classical control (edges ``4'') and the terminations (the fifth layer of operations, denoted by the nodes labelled ``5'').}}
	\label{f4}
\end{figure}

\subsection{Using the QLNC formalism to design entanglement distribution circuits}
\label{example}

As already noted, the purpose of developing the QLNC formalism is to enable the use of classical linear network codes as a basis to design entanglement distribution quantum circuits. We begin by noting that there are situations in which QLNC circuits can distribute entanglement which do not correspond to linear network codes.

\subsubsection{Shallow QLNC circuits for entanglement distribution}

The classical-quantum linear network code result suggests a number of ways in which QLNC circuits can be used to distribute entanglement. In this section we detail one such application, prompted by our desire to use classical linear network coding results and intuitions to distribute entanglement \textit{efficiently} (i.e., with as few layers of quantum operations as possible), we consider the following scenario: Let there be a classical binary linear network code over a network, connecting $k$ transmitter-receiver pairs; and let that network consist of three types of nodes:
\begin{itemize}
    \item Transmitter nodes --- for which each incident edge is outbound (i.e., a directed edge with direction away from the transmitter), and the transmitter broadcasts its bitsteam on all incident edges.
    \item Relay nodes --- that have an arbitrary number of input and output edges, and whose operation is to broadcast the modulo-2 sum of all incoming bitstreams on all of the output edges.
    \item Receiver nodes --- for which each incident edge is inbound, and whose operation is to perform the modulo-2 sum of all of the incoming bitstreams, which yields the desired bitsream (i.e., that transmitted by the corresponding paired transmitter).
\end{itemize}
With the three types of nodes (graph vertices) defined thusly, we can prove an important result about the required depth of layers of quantum operations, when the qubit interaction graph is again $G=\{V,E\}$.
\begin{thm}
\label{thm:constdepth}
If a multiple multicast (including multiple unicast as a special case) classical binary linear network code exists over a network $G$, from a set of transmitter nodes $T= \{t_1 , \cdots , t_N\}$ with in-degree $0$ to a corresponding set of receiver nodes $R_j = \{r_{j,1} , \cdots , r_{j,n_j} \}$ with out-degree $0$, then there is a QLNC circuit whose \textup{CNOT} operations are located along the edges of $G$ and distributes $\ket{\Phi^+}$ and $\ket{\mathrm{GHZ}}$ states between corresponding transmitter/receiver node sets.
Moreover, this circuit has depth at most  ${2 (\chi{-}1) (\delta{+}2) + 2}$ time-steps, where $\delta$ is the largest in/out degree of any vertex in $G$, and $\chi$ is the chromatic number of $G$.
\end{thm}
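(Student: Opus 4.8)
The plan is to build the QLNC circuit by mirroring the given classical code \emph{locally at each node}, exploiting the out-of-order principle so that no global sequential dependency is incurred, and then to bound the depth by a colouring argument. First I would prepare every transmitter qubit $t_i$ in $\ket{+}$ (so that its node formula is a fresh indeterminate $a_i$), every relay qubit $v$ in $\ket{+}$ (fresh indeterminate $z_v$), and every receiver qubit in $\ket{0}$. The circuit then realises each relay's role by four local actions at $v$: (i)~a CNOT from $v$ onto each output neighbour, broadcasting $z_v$; (ii)~a CNOT from each input neighbour $u$ onto $v$, so that $f_v = z_v \oplus \bigoplus_u f_u$; (iii)~a $Z$-observable measurement of $v$ with outcome $b_v$, which substitutes $z_v \mapsto b_v \oplus \bigoplus_u f_u$ throughout the tableau; and (iv)~classically controlled $X$-corrections on the output neighbours whenever $b_v = 1$. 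Finally every measured relay and every unused qubit is terminated using the procedure of the termination lemma proved above, clearing all relative phases.

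For correctness, the essential observation — which I would verify directly in the QLNC formalism — is that steps (i)--(iv) impose the local constraint ``$v$'s broadcast value equals the parity of its inputs'' at the level of the \emph{symbolic formulas}, irrespective of whether those input formulas have yet been resolved. Consequently the order in which the relays are processed is immaterial, and once all relays are handled the formula on each receiver qubit equals precisely the linear form that the classical code delivers there, namely $a_i$ for a receiver paired with transmitter $t_i$. Since the transmitters carry independent indeterminates and the phase function has been cleared by termination, the state factorises, as the summation~\eqref{eqn:parityFormulaExpan} shows, into a product of $\ket{\Phi^+}$ states (unicast) and uniform GHZ states over each set $\{t_i\} \cup R_i$ (multicast). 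I would invoke the correctness of the classical code only through this formula-level substitution, for which Theorem~\ref{mainthm1} (the reading of a QLNC circuit as a code on an extended network $G' \supseteq G$, with the classical $X$-corrections as the extra edges) supplies the clean bookkeeping.

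For the depth bound I would fix a proper vertex colouring of $G$ with $\chi$ colours and process the colour classes in $\chi-1$ successive rounds. The crucial point is that each colour class is an independent set, so when a class $c$ is handled as the set of ``central'' nodes, the broadcast CNOTs (from class $c$ to its neighbours) form a \emph{bipartite} edge set, as do the gather CNOTs (from neighbours into class $c$); each such set has maximum degree at most $\delta$, and by König's theorem decomposes into at most $\delta$ matchings, that is $\delta$ layers of simultaneous CNOTs. Together with one layer of simultaneous $Z$-measurements and one layer of classical $X$-corrections, each round costs at most $2(\delta+2)$ layers, for a total of $2(\chi-1)(\delta+2)$; a single concluding layer of $Z$-operations then clears all termination phases in parallel, giving the claimed $2(\chi-1)(\delta+2)+1$.

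The main obstacle is the depth accounting rather than the correctness. One must check that the scheduling just sketched never forces two operations to share a qubit within a layer beyond the matchings already counted, that the measurement and correction layers of consecutive colour classes can be interleaved without collision, and that the bipartite/König reduction genuinely caps each orientation at $\delta$ matchings even for relays with large in- \emph{and} out-degree. Pinning the per-round constant to exactly $2(\delta+2)$, rather than a slightly larger value, is what requires the careful case analysis deferred to the appendix; by contrast the correctness argument follows routinely once the symbolic-substitution picture of steps (i)--(iv) is in place.
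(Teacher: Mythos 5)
There is a genuine gap, and it sits exactly where you declare the argument ``routine'': the claimed order-independence of your relay gadget is false. Suppose relay $v$ performs its broadcast CNOTs when its formula is $z_v \oplus \bigoplus_{u \in P} g_u$, where $P$ indexes the in-CNOTs executed \emph{before} the broadcast and $g_u$ is the in-neighbour's formula at the moment of that CNOT, and suppose $v$ is $Z$-measured once its formula is $z_v \oplus \bigoplus_{u \in P \cup Q} g_u$. The measurement imposes $z_v = b_v \oplus \bigoplus_{u \in P\cup Q} g_u$, so the value actually deposited on $v$'s out-neighbours collapses to $b_v \oplus \bigoplus_{u \in Q} g_u$: every input gathered \emph{before} the broadcast cancels. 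On the butterfly this is fatal. Number the nodes as in Fig.~\ref{f01}(b), with transmitters $1,3$, relays $2,5$, receivers $4,6$. If the transmitters broadcast first and the relays are then processed (in either order), your steps (i)--(iv) leave the receivers holding $f_4 = a_1$ and $f_6 = a_3$ (up to known additive constants), i.e.\ the trivial pairing $1\leftrightarrow 4$, $3\leftrightarrow 6$ through the side edges, with the middle path contributing only a constant --- not the coded pairing $1\leftrightarrow 6$, $3\leftrightarrow 4$ that the classical code and the theorem demand. To make your gadget forward the full input parity you must ensure $P = \emptyset$ at every relay; since the CNOT on an edge $u \to v$ is simultaneously an out-CNOT of $u$ and an in-CNOT of $v$, this forces the edge CNOTs into reverse-topological order, and consecutive CNOTs along a directed transmitter--receiver path share a qubit, hence occupy distinct layers. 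The quantum depth then grows with the longest path in the network, destroying precisely the constant-depth claim being proved. (Your step (iv) has the same tension: if you instead parallelise the relay measurements into $O(1)$ layers, a relay's out-neighbour may already have been measured, and the $X$-correction you place on it is no longer an available quantum operation.)

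The missing idea is the mechanism the paper uses to reconcile correctness with constant depth, which is genuinely different from your single-broadcast gadget. There, a relay at its colour-class turn broadcasts the partial sum it has \emph{already} accumulated; if further inputs are still to arrive, it is then \emph{terminated} ($X$-measured, with $Z$-corrections) and re-initialised so that it can broadcast a \emph{fresh} indeterminate standing in for the not-yet-arrived inputs; only at the end of the circuit are all such relays $Z$-measured, \emph{simultaneously}, each measurement enforcing ``fresh indeterminate $=$ parity of late inputs $\oplus$ outcome''; and the outcome-dependent corrections are never applied to out-neighbours, but are propagated classically through the recursively defined delayed signal corrections $w_q$ and applied only at receiver nodes. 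This terminate-and-rebroadcast gadget is what caps each intermediate colour round at $2B+2 \le 2\delta+4$ layers and yields $2(\chi-1)(\delta+2)+1$. Your vertex-colouring/K\"onig accounting is sound as a counting argument, but it counts the layers of a circuit that does not prepare the required state; once the circuit is repaired along the lines above, the depth analysis must be redone and lands essentially on the paper's.
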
 
\begin{remark}
It is in general NP-complete to compute the vertex-chromatic number $\chi$ of a network.
However, in many realistic cases it will be easy to compute $\chi$.
For instance, bipartite networks (such as tilings by squares or hexagons, or connected subgraphs of these) have $\chi = 2$ by definition.
In any more complex network $G$, we may alternatively substitute $\chi$ with the number of colours of any proper vertex--colouring that one may find.
For instance, in planar architectures (\emph{i.e.},~in which $G$ is a planar graph) we will have $\chi \le 4$ by the Four Colour Theorem~\cite{fourcolour}, and a four-colouring can be found in polynomial time~\cite{fourcolourp}; and every graph has both a $\deg(G)+1$ vertex-colouring which can be found in polynomial time~\cite{greedy}.
\end{remark}
\begin{remark}
  The role of $\delta$ in Theorem~\ref{thm:constdepth} is in relation to the number of colours of an edge-colouring $\gamma$ of $G$, such that no two edges with the same colour label leave a common vertex or enter a common vertex.
  (We call such a colouring a ``proper directed-edge colouring''.)
  If we transform $G$ into a graph $\tilde G$, in which each vertex $q$ is replaced with a vertex $q_i$ (inheriting only the in-bound edges of $q$) and a vertex $q_o$ (inheriting only the out-bound edges of $q$), then $\delta$ is the maximum degree of $\tilde G$, and $\gamma$ corresponds to a proper edge-colouring of $\tilde G$.
  By Vizing's theorem \cite{Vizing1964}, the edge-chromatic number of $\tilde G$ is at most $\delta + 1$, and an edge-colouring with $\delta + 1$ colours can be found efficiently.
  (An edge-colouring of $\tilde G$ must have at least $\delta$ colours; and it may be easy to find an edge-colouring of this kind, \emph{e.g.},~if $G$ arises from a lattice in the plane.
  If one may find such a colouring, the bound above improves to ${2 (\chi{-}1) (\delta{+}2) + 2}$.
  For the square lattice, with $\chi = 2$ and with $\delta=3$ if no vertex has four in-edges, this yields the bound of $14$ described on page~\pageref{discn:introSquareLatticeBound}.)
\end{remark}

\begin{proof}
Let $c: (V \cup E) \to \mathbb N$ be a colouring of the nodes and edges, such that $c$ provides a proper colouring $1, 2, \ldots, A$ to the nodes of $G$, and also a proper directed-edge colouring $1, 2, \ldots, B \le \delta+1$ to the edges of $G$.
Consider the following procedure:
\begin{enumerate}
\item
  Initialise all of the qubits, where each qubit $q$ is initialised either in the state $\ket{0}$ if it has no outgoing edges, or if it has some neighbour $p$ by an in-bound edge for which $c(p) < c(q)$, and is initialised in the state $\ket{+}$ otherwise.
  (In the QLNC formalism, we associate a formal indeterminate $a_q$ with each qubit $q$ initialised in the $\ket{+}$ state.)
\item
  For each non-receiver node $q$ with $c(q) = 1$ in parallel, perform the following procedure:
  \begin{itemize}
  \item
    For each $1 \le j \le B$, perform a CNOT operation on any edge $e$ with $c(e) = j$ leaving $q$.
    (In the QLNC formalism, this adds $a_q$ to the formula $f_v(a)$ for the node $v$ at the end of $e$.) 
  \end{itemize}
\item
  For each $2 \le h \le A \!-\! 1$, perform the following operations in parallel on non-receiver nodes $q$ with $c(q) = h$:
  \begin{enumerate}[label=\alph*.]
  \item
    For each $1 \le j \le B$, perform a CNOT operation on any edge $e$ with $c(e) = j$ leaving $q$.
  \item
    If $f_q(a) \ne a_q$ (\emph{i.e.},~$q$ was a target of some CNOT or Pauli X operation before this round):
    \begin{enumerate}[label=(\textit{\roman*})]
    \item
      Terminate the qubit $q$, by performing an $X$ observable measurement.
    \item
      If the outcome of the preceding measurement is $-1$, perform $Z$ operations on an appropriate set of qubits, and a $Z$ operation on $q$ to transform the post-measurement state of $q$ from $\ket{-}$ to $\ket{+}$.
      (If any qubit $v$ has been selected to be subject to a $Z$ operation by multiple qubits $q$ with $c(q) = h$, we perform $Z$ on $v$ if and only if the number of such qubits $q$ is odd.)
    \item
      If $q$ has any neighbours $p$ by in-bound edges, such that $c(p) < c(q)$, then for each $1 \le j \le b$, perform any CNOT operations on edges $e$ with $c(e) = j$, which are outgoing from $q$.
      (In the QLNC formalism, this adds $a_q$ to the node-formula $f_v(a)$ for the node $v$ at the end of $e$.)
    \end{enumerate}
  \end{enumerate}
  \item
    For each non-reciever node $q$ with $c(q) = A$ in parallel, perform the following procedure:
    \begin{itemize}
    \item 
        For each $1 \le j \le B$, perform a CNOT operation on any edge $e$ with $c(e) = j$ leaving $q$.
    \end{itemize}
  \item
    For all relay qubits $q$: if $c(q) < A$, perform a $Z$ observable measurement on $q$; otherwise terminate $q$ (\emph{i.e.},~measure $q$ with the $X$ observable and perform appropriate $Z$ corrections).
  \item
    Perform classically controlled Pauli-$X$ gates on all out-edges according to the outcomes of the $Z$ observable measurements.
    (If any qubit $v$ has been selected to be subject to an $X$ operation by multiple relay qubits, we perform $X$ on $v$ if and only if the number of such qubits $q$ is odd.)
\end{enumerate}
The operations of Step~1 has depth $1$.
Both steps~2 and~4 have depth at most $B$.
Step~3 is a loop with $A {\!\:-\!\:} 2$ iterations, in which part~(a) has depth at most $B$, and part~(b) has depth at most $B {\!\:+\!\:} 2$.
Step~5 has depth at most $2$, and step~6 has depth $1$.
Together, the depth is then $1 + B + (A\!\:{-}\!\:2)(2B\!\:{+}\!\:2) + B + 2 + 1 = 2(A\!\:{-}\!\:1)(B\!\:{+}\!\:1) + 2 \le 2(\chi\!\:{-}\!\:1)(\delta\!\:{+}\!\:2) + 2$.

Fig.~\ref{appf1} shows a sketch of why this procedure works.
In effect, we wish for ``information'' (more precisely: correlation of values of a qubit in the standard basis, when taken in superposition) to be transmitted through each relay node, from of its sources (with multiplicity taken modulo 2) to the qubits on each of its outward links.
Some of this information may accumulate at a given relay node $q$ before round $c(q)$, in which case it is explicitly passed on through a CNOT.
The rest accumulates at $q$ after round $c(q)$, and also after the node $c(q)$ has communicated a formal indeterminate $a_q$ on each of its outgoing links.
If we may collapse the state in such a way to assign to $a_q$ the modulo~2 sum of the remaining signals from its incoming links (accumulated after round $c(q)$), this collapse will complete the transmission of the information from the inbound links of $q$ through to the outbound links.

\begin{figure}[!t]
	\centering
	\includegraphics[width=0.73\linewidth]{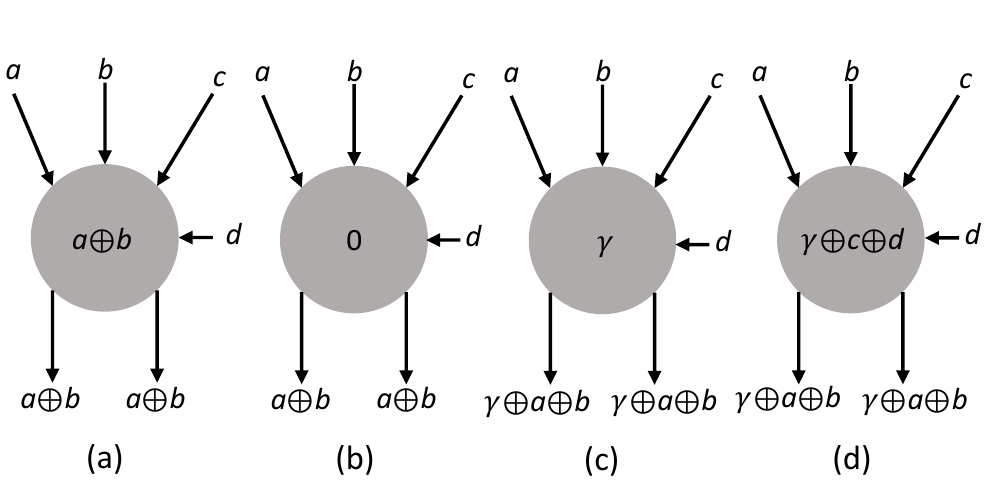}
	\captionsetup{width=0.95\linewidth}
	\caption{\small{Example of a relay node, whose operation in the linear network code is to forward $a \oplus b \oplus c \oplus d$ to all three outgoing edges. If the vertex colouring is such that the turn of this vertex is after incoming symbols $a$ and $b$ have arrived, but before $c$ and $d$ have, then the procedure continues as follows: in (a) $a \oplus b$ (i.e., the current label of the vertex) is forwarded to all outgoing edges; (b) the qubit is terminated and set to zero; (c) the qubit is set to the $\ket{+}$ state, and given the new label $\gamma$, which is then forwarded to all of the outgoing edges, therefore meaning that over the two rounds of forwarding $a \oplus b \oplus \gamma$ has been forwarded; the qubit then waits for the remainder of the process to complete, after which all edges will have been performed, so its label will now be $c \oplus d \oplus \gamma$, which can then be measured and corrected such that $c\oplus d = \gamma$, which then means that $a \oplus b \oplus c \oplus d$ has been forwarded as required.}}
	\label{appf1}
\end{figure}

More formally, consider the node formulae which result from this procedure.
\begin{itemize}
\item 
    For each relay node, let $Z_p(a)$ denote the boolean formula which is transmitted to it on an incoming link from a node $p$ for which $c(p) < c(q)$.
    We will then have $Z_p(a) = a_p \oplus E_p(a)$, where $E_p(a)$ is the modulo~2 sum of the corresponding functions $Z_r(a)$ for nodes with edges towards $p$ such that $c(r) < c(p)$.
\item
    The formula which is stored at qubit $p$ just prior to its measurement in Step~4 is the formula $Y_p(a) = a_p + L_p(a)$, where $L_p(a)$ is the modulo~2 sum of $Y_r(a)$ for nodes $r$ with links inwards to $p$ such that $c(r) > c(p)$.
\end{itemize}
If in Step~4 we measure qubit $p$ and collapse it to the state $\ket{0}$, we in effect condition on the situation that $a_p = L_p(a)$.
(In the event that we obtain $\ket{1}$, we perform corrections which allow us to simulate having obtained $\ket{0}$ instead.)
This produces an acyclic graph of formula substitutions, from the node-formulae of the transmittors to the node-formulae of the receivers.
By induction on the substitution depth (\emph{i.e.},~the distance of relay nodes from any reciever node), we may show that performing the necessary substitutions in the formula for $Z_p(a)$ yields the information which, in the classical linear protocol, $p$ would transmit on its outgoing links.
It follows that the parity function computed at each receiver node is the function $a_t$ (for its corresponding transmitter node $t$) that is computed in the classical linear network code.
\end{proof}

In the protocol above, each relay is measured twice (\emph{i.e.},~for the termination, and then at the end to resolve the extra formal label introduced).
For this reason, it is necessary to strictly separate transmitters, receivers and relays.
However, this setting is not too restrictive, and corresponds to examples of classical linear network codes such as we see in Figs.~\ref{f01} and~\ref{f01a}.

Note that while Steps 2, 3a, 3b\textit{iii}, and 4 of our protocol iterate through all edge colours $1 \le j \le B$, the only edge-colours that contribute to the depth are those associated to edges which leave some \emph{vertex} of the colour $1 \le h \le A$ being considered in the given step.
Thus the bound above will often be loose, and in fact it may be possible to find better realisations using a suitably tuned directed edge-colouring of $G$.
However, our result obtains regardless which edge-colouring one uses, so long as it uses at most $\delta+1$ edge-colours, which again may easily be found~\cite{Vizing1964}.

\subsubsection{Example of QLNC solution involving entanglement swapping, for which no classical linear network coding solution exists}
\label{eswap}

\begin{figure}[!t]
	\centering
	\includegraphics[width=\linewidth]{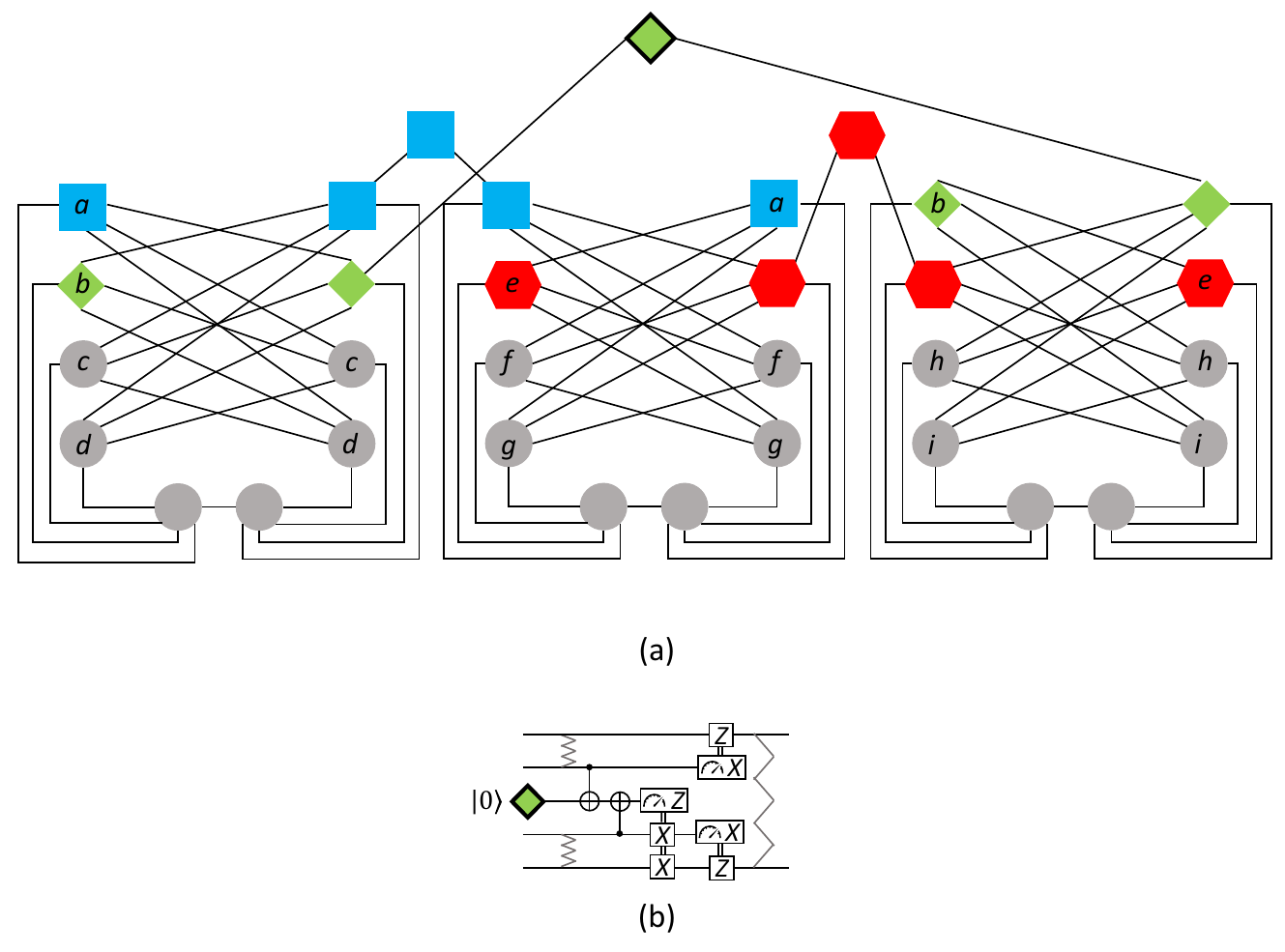}
	\captionsetup{width=0.95\linewidth}
	\caption{\small{An example of a composite network with a QLNC circuit but no (classical) linear network code. Note that this composite network corresponds to three connected copies of the network in Fig.~\ref{f025}, here we draw the part of the graph in Fig.~\ref{f025}(b) as the two bottom-most nodes of each component.}}
	\label{appf2}
\end{figure}

When using linear network codes to design QLNCs, in which we distribute entangled pairs, we are free to assign which half of each desired Bell state corresponds to the transmitter in the linear network code and which half represents the receiver. However, while we have the freedom to decide which is the transmitter and which is the receiver, it may be the case that deciding the transmitter / receiver assignment for one Bell state fixes that for the rest. For example, if we consider the corresponding QLNC to the linear network code shown in Fig.~\ref{f025}, we can see that, even if we allow the links to be bi-directional, we must have one column of receivers and one column of receivers. That is, we cannot find a linear network code for the case where some of the left-hand nodes are receivers and some are transmitters.

This principle allows us to construct composite networks, in which some data must flow through multiple networks such that there is no linear network code. This is the case shown in Fig.~\ref{appf2}(a), composed of three copies of the network shown in Fig.~\ref{f025} with some extra links, in which each pair of letters is to be connected. Even if we are free to assign which of each pair of letters is the transmitter and the receiver, and also the direction of each link, we still cannot find a linear network code. This can be seen by considering the non-circular coloured nodes, which correspond to data which must flow through two of the component networks. Using the linear network code of Fig.~\ref{f025}, we can connect the pairs ``$cc$'' and ``$dd$'', as well as propagating the left-hand occurrences of ``$a$'' and ``$b$'' forward from the left-hand column of vertices to the second from left. The left hand occurrence of $a$ can now be forwarded via the intermediate blue square node, and the same left-to-right linear network code can be performed on the middle of the three component graphs, and then again forwarding ``$e$'' through the intermediate red hexagonal node to the right-hand of the three component graphs. Once again, we perform the same left-to-right linear network code on the right-hand of the three graphs, which means that we have now connected all of the pairs of letters, with the exception of $b$. In the case of $b$, each of the two $b$s has been forwarded as if it were at a transmitter, and they are connected by a common receiver --- the top-most node, which is a green diamond with a thick black ring.

Obviously, this is not a linear network code, as we have not connected pairs of letters as if one were a transmitter and the other a receiver (i.e., by a continuous data-flow from transmitter to receiver), however we \textit{can} find a QLNC circuit, as routing each letter towards a common receiver (the black-ringed green diamond node) can yield the desired Bell state by entanglement swapping in the black-ringed node, as shown in Fig.~\ref{appf2}(b).

A similar argument can be made for there not being a linear network code even if the component linear network codes are run right-to-left, in which case the black-ringed node would look like a single transmitter forwarding data to two receivers (the nodes marked b). A situation which can also be implemented as a QLNC circuit (\emph{i.e.}, if the black ringed node is terminated at the end) that does not correspond to any linear network code with the transmitter-receiver pairs as designated by the symbols in Fig.~\ref{appf2}.

\subsubsection{Classical-quantum linear network codes}

In Section~\ref{outoforder} we saw how the network codes in QLNC circuits can be performed ``out of order'' in some cases, and in Section~\ref{eswap} we gave an example of the use of entanglement swapping to implement a linear network code as if two transmitters are routing towards a common receiver. These are two instances of a general principle we notice that \emph{together the} CNOT \emph{operations and classical control must form a linear network code}. That is, if we consider the following situation:
\begin{enumerate}
    \item We have $n$ qubits connected in a network $G = \{V,E\}$, in which each edge means that a single CNOT (in either direction) is allowed.
    \item We allow classically-controlled $X$ gates (conditioned on measurement outcomes). It is convenient to consider this possibility of a classical control conditioned on the measurement outcome of a different vertex as a completely connected graph $K_n$ on the same set of vertices (where $n = \lvert V(G) \rvert$). That is, each edge represents the possibility of performing a classically controlled Pauli-$X$ gate.
\end{enumerate}
These coherently- and classically-controlled operations represent two of four primitives that we allow, the others being:
\begin{enumerate}
\setcounter{enumi}{2}
    \item Initialisation of qubits in the $\ket{+}$ or $\ket{0}$ state.
    \item Termination of the qubits according to the process described in the Section~\ref{sim}.
\end{enumerate}

\begin{thm}
\label{mainthm1}
Consider a multiple multicast (including multiple unicast as a special case) classical binary linear network code exists over any subgraph of the graph $G'= G \cup K_n$ sending a unit rate bitstream, where each edge of the graph is a unit rate bi-directional edge (but not allowing fractional routing).
Suppose that this code has a set of transmitting source vertices $T' = \{t_1, \ldots, t_{N'}\}$ for some $N' > 0$, where the first $N < N'$ of these have coresponding reciever sets $R_j = \{r_{j,1} , \cdots , r_{j,n_j} \}$ for $1 \le j \le N$ (with the remaining transmitters $t_{N{+}1}, \ldots, t_{N'}$ having signals which are not necessarily actually received by any ``receiver'' nodes).
Suppose further that \textup{(a)}~the information transmitted on the edges of $K_n$ from any single node, and \textup{(b)}~the information transmitted by the nodes $t_1$ through $t_N$, are linearly independent of each other.
Then by simulating this linear network code by a QLNC circuit, with \textup{CNOT} operations restricted to the same graph $G$ and classically-controlled Pauli operations oriented along the other edges, the resulting protocol generates a product of $\ket{\Phi^+}$ and $\ket{\mathrm{GHZ}}$ states, where each $\ket{\Phi^+}$ or $\ket{\mathrm{GHZ}}$ is over each of the sets $\{ t_j, r_{j,1} , \cdots , r_{j,n_j} \}$ for all $1 \le j \le N$.
\end{thm}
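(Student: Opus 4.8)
The plan is to reason entirely within the QLNC formalism, representing the state throughout by a parity function tableau $T = \bigl[\:\! C \:\!\big\vert\:\! \mathbf p \:\!\bigr]$, and to show that simulating the code leaves each surviving qubit carrying exactly the node-formula that the classical code computes at the corresponding vertex of $G'$. I would initialise each transmitter $t_j$ in $\ket{+}$, so that $f_{t_j}(a) = a_{t_j}$ for a distinct indeterminate $a_{t_j}$, and every relay and receiver in $\ket{0}$, so that its formula is $0$. The central observation is that the three primitives available --- a CNOT along $G$, a classically-controlled $X$ along $K_n$, and termination --- act on node-formulas by the simple linear update rules catalogued earlier for unitary gates and by the collapse rules catalogued for $X$- and $Z$-measurements: a CNOT performs $f_v \gets f_v + f_u$, exactly accumulating the global encoding vector of an edge. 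I would then aim to prove, by induction along a topological order of the code (equivalently, by substitution depth from the receivers, as in the proof of Theorem~\ref{thm:constdepth}), that the formula held at each node equals the corresponding classical encoding function, so that in particular each receiver $r_{j,i}$ ends with $f_{r_{j,i}}(a) = a_{t_j}$.

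The delicate ingredient is the treatment of the $K_n$-edges and of any ``out-of-order'' evaluation (Section~\ref{outoforder}). Following that discussion, a classically-controlled $X$ with source $u$ is simulated by a $Z$-observable measurement of $u$: by the collapse rule for an entangled qubit (case~\ref{item:ZmeasEnt}) this projects onto the subspace where the edge-signal $f_u(a)$ takes a definite value $b$, eliminating one indeterminate and imposing the linear constraint that the classical code assigns to that edge, while the outcome-dependent $X$ correction on the affected receivers reproduces, in the phase-free standard-basis picture, the effect the coherent edge would have had. Terminating a relay $q$ is an $X$-measurement, which when $f_q(a)$ is entangled (case~\ref{item:XmeasEnt}) may deposit a relative phase --- precisely the sort of phase the termination lemma stated above allows us to remove. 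I would track all of these updates with the same bookkeeping used in Theorem~\ref{thm:constdepth} (the functions $Z_p$, $Y_p$ and their substitutions), arguing that once all relays are terminated and all $K_n$-measurements resolved, the only surviving formulas are $a_{t_j}$ on each member of $\{t_j, r_{j,1}, \ldots, r_{j,n_j}\}$.

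Having established the formulas, I would finish as follows. By the termination lemma stated above, there is a subset $S$ of qubits for which applying $Z^{\otimes S}$ sends the phase vector $\mathbf p$ to $\mathbf 0$; absorbing these $Z$'s into the corrections leaves a phase-free tableau, so the final state is the uniform superposition $\tfrac{1}{\sqrt{2^M}} \sum_{x} \bigotimes_j \ket{f_j(x)}$ over the surviving qubits. The hypotheses \textup{(a)} and \textup{(b)} are exactly what guarantee the tensor-product structure of this sum: condition~\textup{(b)}, that the signals of $t_1, \ldots, t_N$ are linearly independent, ensures that the indeterminates $a_{t_1}, \ldots, a_{t_N}$ persist as distinct free variables in the final parity function matrix rather than being collapsed together, while condition~\textup{(a)}, that the $K_n$-transmissions out of any single node are linearly independent, ensures that the $Z$-measurements realising the classical controls each eliminate only an auxiliary indeterminate (arising from a relay or from one of the unpaired transmitters $t_{N+1}, \ldots, t_{N'}$) and never correlate two distinct groups. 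Summing over the $N$ remaining free indeterminates then factorises the state as $\bigotimes_{j=1}^{N} \ket{\mathrm{GHZ}}_{\{t_j\} \cup R_j}$, which is a $\ket{\Phi^+}$ precisely when $n_j = 1$, as required.

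I expect the main obstacle to be the inductive correspondence of the second paragraph: making rigorous that the interleaving of coherent CNOTs, state-collapsing $Z$-measurements, and their classical corrections reproduces the classical code's encoding function at \emph{every} node, including the out-of-order cases where a relay forwards a formal indeterminate $a_q$ before all of its inputs have arrived (cf.\ Fig.~\ref{appf1}) and the entanglement-swapping cases where two ``transmitter'' signals meet at a common receiver (Fig.~\ref{appf2}). The subtlety is that each measurement both imposes a constraint and, through its correction, alters downstream formulas, so one must verify that the net effect on the surviving formulas is exactly the linear map the code prescribes, and that conditions~\textup{(a)} and~\textup{(b)} are sufficient to keep the transmitter indeterminates independent throughout.
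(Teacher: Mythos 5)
Your plan identifies the right ingredients --- the tableau bookkeeping, the role of termination, and the fact that hypotheses (a) and (b) are what prevent the measurements from collapsing the target state --- but it leaves unresolved exactly the step you flag at the end, and that step is where the paper's proof does its real work. The paper does not attempt an interleaved induction in which coherent CNOTs, collapsing $Z$-measurements and their classical corrections alternate. Instead it invokes the principle of deferred measurement: every classically-controlled $X$ along a $K_n$-edge is replaced by a coherent CNOT whose control is measured only at the end of the circuit (Fig.~\ref{f004}(a)--(b)). After this replacement, and with one further device --- an ancilla register onto which each $\ket{+}$-qubit copies its indeterminate, so that the intermediate state can be written exactly as $\tfrac{1}{\sqrt{2^{n_1}}}\sum_i \ket{i}\ket{\mathrm{L}\mathbf{i}}$ as in (\ref{app1eq40}) --- the entire circuit up to the terminal measurements is a pure CNOT network, and the induction you are worried about reduces to the elementary statement that a CNOT adds one row of $\mathrm{L}$ to another, verified in (\ref{app1eq50})--(\ref{app1eq60}). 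All collapse analysis is then done once, at the end: terminations are checked case-by-case to leave the residual state invariant up to global phase, and the remaining (deferred) measurements are shown not to collapse the state if and only if, after Gaussian elimination, each measured qubit's label contains a symbol occurring in no other label --- the matrix $\mathrm{M}'$ criterion of (\ref{mateq}). That criterion is the precise form of your informal claim that conditions (a) and (b) keep the transmitter indeterminates free; your proposal asserts this factorisation but supplies no argument for it.

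The gap is therefore concrete: your induction hypothesis, ``the formula at each node equals the classical encoding function,'' is not stable under the interleaved steps. A $Z$-measurement substitutes a value for an indeterminate in \emph{every} formula containing it (rule~\ref{item:ZmeasEnt}), and the subsequent $X$-corrections shift downstream formulas by constants, so at intermediate times the formulas equal the classical encoding functions only modulo the accumulated constraints --- and in the entanglement-swapping configurations of Fig.~\ref{appf2}, which Theorem~\ref{mainthm1} must cover, there is no consistent transmitter-to-receiver data flow with respect to which a topological-order induction can even be set up. The bookkeeping of Theorem~\ref{thm:constdepth} that you propose to reuse was designed for that theorem's specific protocol ordering and does not transfer as stated. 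The deferred-measurement reformulation dissolves this difficulty rather than confronting it; that reformulation, together with the explicit non-collapse criterion, is the idea your proposal is missing.
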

\begin{proof}[Proof (sketch)]
The core of the proof is showing that the QLNC formalism described correctly keeps track of the quantum state, which follows from the formalism description in Section~\ref{main}. We provide an explicit proof of the Theorem in Section~\ref{app1}, which explains why general QLNC circuits of this form achieve the desired result, and also serves to give a detailed walk through illustrating precisely how the QLNC formalism (including terminations) correctly simulates QLNC circuits in practise.
\end{proof}
An important special case occurs when the linear network code only requires edges in the graph $G$.
\begin{corollary}
\label{corr}
Consider a multiple multicast (including multiple unicast as a special case) classical binary linear network code exists over any subgraph of the graph $G$ sending a unit rate bitstream, where each edge of the graph is a unit rate bi-directional edge (but not allowing fractional routing).
Suppose that this code has a set of transmitting source vertices $T' = \{t_1, \ldots, t_{N'}\}$ for some $N' > 0$, where the first $N < N'$ of these have corresponding receiver sets $R_j = \{r_{j,1} , \cdots , r_{j,n_j} \}$ for $1 \le j \le N$ (with the remaining transmitters $t_{N{+}1}, \ldots, t_{N'}$ having signals which are not necessarily actually received by any ``receiver'' nodes).
Then by simulating this linear network code by a QLNC circuit, with \textup{CNOT} operations restricted to the same graph $G$, the resulting protocol generates a product of $\ket{\Phi^+}$ and $\ket{\mathrm{GHZ}}$ states, where each $\ket{\Phi^+}$ or $\ket{\mathrm{GHZ}}$ is over each of the sets $\{ t_j, r_{j,1} , \cdots , r_{j,n_j} \}$ for all $1 \le j \le N$.
Moreover, this can be achieved using only three of the primitives: initialisation, \textup{CNOT} and termination.
\end{corollary}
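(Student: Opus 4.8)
The plan is to obtain Corollary~\ref{corr} as the special case of Theorem~\ref{mainthm1} in which the underlying linear network code lives entirely within $G$, so that effectively $G' = G$ and none of the auxiliary edges of $K_n$ are used. Under this restriction the proof reduces to two tasks: checking that the hypotheses of Theorem~\ref{mainthm1} hold automatically, and tracking which of the four primitives actually appear in the simulating circuit.

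First I would dispose of hypotheses (a) and (b). Hypothesis~(a) --- the linear independence of the messages carried on the $K_n$-edges leaving any single node --- holds \emph{vacuously}, since a code confined to a subgraph of $G$ uses no $K_n$-edges whatsoever. Hypothesis~(b) --- that the signals injected by $t_1, \ldots, t_N$ are mutually linearly independent --- is immediate from the definition of a multiple multicast (or multiple unicast) problem: each such transmitter is the source of its own independent information stream, corresponding in the QLNC formalism of Section~\ref{main} to a distinct formal indeterminate $a_{t_j}$. With both hypotheses in hand, Theorem~\ref{mainthm1} applies directly and furnishes a QLNC circuit, with CNOTs confined to the edges of $G$, whose output is the advertised product of $\ket{\Phi^+}$ and $\ket{\mathrm{GHZ}}$ states over the sets $\{t_j, r_{j,1}, \ldots, r_{j,n_j}\}$ for $1 \le j \le N$.

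It remains to justify the ``moreover'' clause. Here the key observation is that the classically-controlled Pauli-$X$ gates of Theorem~\ref{mainthm1} are precisely the realisation of $K_n$-edges; since the code uses no such edges, no classically-controlled $X$ gate is ever invoked, and the simulation --- evaluated in the natural topological order of the code rather than the out-of-order scheme that forces such corrections --- reduces to initialisation of $\ket{0}$ and $\ket{+}$ states, CNOTs along the directed edges of $G$ performing the relay computations, and termination of the interior relay qubits. I expect the only delicate point to be confirming that the termination primitive of Section~\ref{sim} is genuinely self-contained: its internal $Z$-corrections must be recognised as part of termination and not mistaken for separate classically-controlled operations along $K_n$, so that the tally of exactly three primitives is honest. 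Once this bookkeeping is settled, the corollary follows with no computation beyond the invocation of Theorem~\ref{mainthm1}.
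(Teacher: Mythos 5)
Your proposal is correct and takes the same route as the paper: the paper's own proof is a one-line observation that the corollary ``simply selects the QLNC solutions which have no classical control (apart from in the terminations),'' i.e.\ exactly your specialisation of Theorem~\ref{mainthm1} to codes using no $K_n$-edges. Your additional bookkeeping --- noting that hypothesis (a) is vacuous, that (b) follows from the independence of the transmitters' streams, and that the $Z$-corrections inside termination do not count as a separate classical-control primitive --- merely makes explicit what the paper leaves implicit.
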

\begin{proof}
This corollary simply selects the QLNC solutions which have no classical control (apart from in the terminations).
\end{proof}

\section{Generalisation to qudits of prime dimension}
\label{qubitsection}

Classical network coding is not restricted to information streams consisting of individual bits.
Indeed, it is common in the literature to consider signals consisting of elements from finite fields in general, including the fields $\mathbb Z_d$ for $d$ a prime~\cite{LNCnew1,LNCnew2}.
As most proposed quantum hardware platforms involve operations on qubits, our work has focused mainly on linear network codes over $\mathbb Z_2$.
However, our techniques work equally well over qudits of any prime dimension $d$, using generalisations of Clifford group operations.

\subsection{Generalising the QLNC formalism}

On a Hilbert space of dimension $d$, label the standard basis states by $\ket{0_d}$, $\ket{1_d}$, \ldots, $\ket{d{-}1_d}$. 
Let $X_d$ and $Z_d$ be unitary operators satisfying $X \ket{a} = \ket{a{+}1 ~(\mathrm{mod}~d)}$ and $Z \ket{a_d} = \omega^a \ket{a_d}$ for $a \in \{0,1,\ldots,d{-}1\}$, where $\omega = \exp(2\pi i/d)$.
These operators serve to form the basis of a generalised Pauli group of qudits of dimension $d$~\cite{GottesmanQudits,Appleby2005}.
The set of unitaries which preserves this extended Pauli group under conjugation corresponds to the Clifford group, and the effects of those operators on eigenstates of the $X$ and $Z$ operators can be simulated using an extension of the stabiliser formalism~\cite{GottesmanQudits,deBeaudrap2013}.

We may define an extension of the QLNC formalism to qudits of dimension $d$ by identifying the operations of the generalised Clifford group which correspond to the operations of the qubit QLNC formalism:
\begin{itemize}
\item
    Preparation of the states $\ket{0_d}$ and $\ket{+_d} = \tfrac{1}{\sqrt d}\bigl(\ket{0_d} + \ket{1_d} + \cdots + \ket{d{-}1_d} )$;
\item
    Performing (possibly classically controlled) $X_d$ and $Z_d$ operations on any qudit;
\item
    Measuring qudits in the eigenbasis of the $Z_d$ operator or the $X_d$ operator;
\item
    Addition operations $\mathrm{Add}_d$ on pairs of qudits which are connected in the network $G$, whose effect on standard basis states are $\mathrm{Add}_d \ket{x} \ket{y} = \ket{x}\ket{y + x ~(\mathrm{mod}~ d)}$.
\end{itemize}
We call circuits composed of these operations ``qudit QLNC circuits'' for qudits of dimension $d$.

These operations allow one to prepare states which are the analogue of parity function states, which one might call ``linear function states'', which have the form
\begin{equation}
  \label{eqn:parityFormulaExpana}
    \frac{1}{\sqrt{d^N}} \!
      \sum_{x \in \{0,1\}^N} \!\!
        \omega^{\phi(x)}\,
      \ket{f_1(x)} \otimes \ket{f_2(x)} \otimes \cdots \otimes \ket{f_n(x)}
\end{equation}~\\[-2ex]
for linear functions $f_k(a)$ and $\phi(a)$, and where $0 \le N \le n$.
We may represent $n$-qubit linear function states states by $(N{+}1)\times(n{+}2)$ ``linear function tableaus'' $T = {\bigl[\:\! C \:\!\big\vert\!\: \mathbf p \:\!\bigr]}$, which represent the linear function state by specifying the coefficients of the functions $f_k$ and $\phi$ in the same way as in the qubit case.
It is easy to show that preparation of the states $\ket{0_d}$ and $\ket{+_d}$ may be represented by the same column/row insertion steps, and the effects of the unitaries $X_d$, $Z_d$, and $\mathrm{Add}_d$ may be simulated in the same way through elementary column operations. (Indeed, one may use the same $\{0,1\}$-matrices in each case, albeit with coefficients modulo $d$.)
The procedures to simulate measurements are similar to the case $d = 2$, but must be described without relying (for instance) on $1$ being the only non-zero element.
It also becomes more helpful to describe the measurement outcomes as some element $s \in \mathbb Z_d$, representing a measurement of the $\omega^s$ eigenstate either of $X$ or of $Z$.
As before, we put the tableau into reduced row echelon form, making the $k\textsuperscript{th}$ column (counting from 0) a pivot column if possible, where $k$ is the qubit to be measured.
\begin{itemize}
\item
  For an $X_d$-eigenbasis measurement:
  \begin{enumerate}[label=({\alph*})]
  \item
    \label{item:XmeasProd}
    If $f_k(a) = a_g$ for an indeterminate which does not occur in any other qubit formula $f_j(a)$ --- \emph{i.e.},~if there is a single $1$ in the $g\textsuperscript{th}$ row of $C$ --- 
    then the state is unchanged by the measurement, and the measurement outcome is $s = p_g$.
  \item
    \label{item:XmeasEnt}
    Otherwise, let $z_{N\!\!\:{+}\!\!\:1}$ be a new indeterminate (represented in $C$ by adding a new row at the bottom), and choose a measurement outcome $s \in \mathbb Z_d$ uniformly at random.
    If $f_k(a)$ is constant prior to measurement, let $\Delta$ be the $(N{+}2)$-dimensional column vector with $1$ in the final row, and zero elsewhere; otherwise, if $f_k(a) = a_g$, let $\Delta$ be the $(N{+}2)$-dimensional column-vector with $-1$ in row $g$ and $1$ in row $N{+}1$ (counting from~$0$), and zero elsewhere.
    We then add $\Delta$ to the $k\textsuperscript{th}$ column of $C$, and subtract $s\Delta$ from $\mathbf p$.
  \end{enumerate}
\item
  For a $Z_d$-eigenbasis measurement:
  \begin{enumerate}[label=({\alph*})]
  \item
    \label{item:ZmeasProd}
    If $f_k(a) = c$ is a constant function, then the measurement leaves the state unchanged, and the measurement outcome is $c$.
  \item
    \label{item:ZmeasEnt}
    Otherwise, we select a measurement outcome $s \in \mathbb Z_d$ uniformly at random.
    Let $\Delta = s \mathbf e_0 - \mathbf c_k$.
    For any column $j$ of $T = {\bigl[\:\!C\:\!\big\vert\:\!\mathbf p\:\!\bigr]}$ which contains a non-zero coefficient $r_j \ne 0$ in the $g\textsuperscript{th}$ row (including the $k\textsuperscript{th}$ column itself), add $r_j \Delta$ to column $j$; then remove the row $g$ entirely from the tableau.
  \end{enumerate}
\end{itemize}
The analysis for these operations is similar to that of the case $d = 2$.
This allows us to simulate qudit QLNC circuits.

Finally, note that the property of parity function tableaus, that their ``parity function matrix'' $C$ has full rank, also holds for linear function tableaus for any $d$ prime, as these properties only depend on the fact that these matrices are defined over a field (which is a property on which we have also relied to consider reduced row echelons when simulating measurements).
As a result, those results (such as ``termination'' of qubits being well-defined) which rely on such rank properties are also true in the QLNC formalism on qudits of prime dimension.

\subsection{Entanglement distribution using the qudit QLNC formalism}

For qudits of prime dimension $d$, the natural analogues of Bell states and GHZ states are the states
\begin{equation}
\begin{aligned}
    \ket{\Phi_d^+} \,&=\, \frac{1}{\sqrt d} \,\sum_{x = 0}^{d-1} \,\ket{x}\ket{x};
  \qquad\qquad
    \ket{\mathrm{GHZ}_{d,n}} \,=\, \frac{1}{\sqrt d}\, \sum_{x=0}^{d-1}\, \underbrace{\ket{x}\ket{x}\cdots\ket{x}}_{\substack{\text{$n$ tensor} \\[0.25ex] \text{factors}}}\;.
\end{aligned}  
\end{equation}
These are evidently linear function states on qudits of dimension $d$.
As all of the QLNC formalism for qubits (including the notion of qubit termination) generalises in an appropriate way to qudits --- albeit possibly with a constant factor $d{-}1$ overhead, to realise some power of the $\mathrm{Add}_d$, $Z_d$, or $X_d$ operations --- we obtain the following results:
\begin{corollary}[to Theorem~\ref{thm:constdepth}]
Let $d$ be prime.
If a multiple multicast (including multiple unicast as a special case) classical $\mathbb Z_d$ linear network code exists over a network $G$, from a set of transmitter nodes $T= \{t_1 , \cdots , t_N\}$ with in-degree $0$ to a corresponding set of receiver nodes $R_j = \{r_{j,1} , \cdots , r_{j,n_j} \}$ with out-degree $0$, then there is a QLNC circuit whose \textup{CNOT} operations are located along the edges of $G$ and distributes $\ket{\Phi^+_d}$ and $\ket{\mathrm{GHZ}_{d,n_j}}$ states between corresponding transmitter/receiver node sets.
Moreover, this circuit has depth at most  ${2(d{-}1)(\delta{+}2) (\chi{-}1) + 2}$ time-steps, where $\delta$ is the largest in/out degree of any vertex in $G$, and $\chi$ is the chromatic number of $G$.
\end{corollary}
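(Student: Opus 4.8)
The plan is to mirror the proof of Theorem~\ref{thm:constdepth} essentially verbatim, substituting each qubit primitive with its qudit analogue: CNOT is replaced by $\mathrm{Add}_d$, the Pauli corrections $X$ and $Z$ by $X_d$ and $Z_d$, the preparation $\ket{+}$ by $\ket{+_d}$, and the target states $\ket{\Phi^+}$, $\ket{\mathrm{GHZ}}$ by $\ket{\Phi^+_d}$, $\ket{\mathrm{GHZ}_{d,n_j}}$. Correctness of each simulated step is then guaranteed by the qudit QLNC formalism developed in the preceding subsection, since linear function tableaus over the field $\mathbb Z_d$ support exactly the same elementary column/row operations (and the same reduced-row-echelon based measurement rules) as parity function tableaus do over $\mathbb Z_2$.

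Concretely, I would run the same colouring-driven schedule: fix a proper vertex colouring with $A \le \chi$ colours and a proper directed-edge colouring with $B \le \delta+1$ colours, and execute the same initialisation, propagation, and termination steps as before, now interpreting the node formulae as $\mathbb Z_d$-linear functions and the relay operations as forming the modulo-$d$ sum dictated by the given $\mathbb Z_d$ network code. The one genuinely new ingredient is the handling of scalar coefficients: a $\mathbb Z_d$ linear code may require a node to forward $\sum_i c_i x_i$ with coefficients $c_i \in \mathbb Z_d \setminus \{0\}$, which I realise by applying $\mathrm{Add}_d$ along each edge up to $d-1$ times to effect multiplication by $c_i$; likewise the termination corrections require $Z_d$ (and the receiver corrections $X_d$) raised to a power $s \in \mathbb Z_d$, realised by up to $d-1$ repetitions of the corresponding primitive. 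With these substitutions in place, correctness follows by the same induction on the substitution depth (the distance of a relay node from a receiver) that established the qubit case: collapsing each relay qubit at its $Z_d$-measurement conditions on $a_p = L_p(a)$ over $\mathbb Z_d$, and the inductive hypothesis propagates the correct $\mathbb Z_d$-linear message $Z_p(a)$ along the acyclic graph of substitutions, so that each receiver set ends up carrying the value $a_t$ of its paired transmitter, yielding the desired $\ket{\Phi^+_d}$ or $\ket{\mathrm{GHZ}_{d,n_j}}$.

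For the depth, I would note that each layer of the qubit procedure that consisted of a single $\mathrm{Add}$, $Z$, or $X$ operation (or a power thereof) now costs at most $d-1$ consecutive time-steps, while the layout imposed by the two colourings is unchanged because the $\le d-1$ repetitions on a single edge never conflict with the edge-colouring constraints. Carrying the factor $d-1$ through the same arithmetic that produced the qubit bound $2(\chi-1)(\delta+2)+1$, and bounding the additive constants generously, gives the stated depth $2(d-1)\bigl[(\delta+2)(\chi-1)+1\bigr]$.

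The main obstacle, such as it is, lies not in the overall structure but in verifying that the two results the proof leans on survive the passage to $\mathbb Z_d$. First, that the termination procedure remains well-defined: this is immediate because the lemma establishing it relies only on the linear function matrix $C$ having \emph{full rank}, a property that holds over any $\mathbb Z_d$ with $d$ prime, as noted at the end of the previous subsection. Second, that realising scalar multiplication by repeated $\mathrm{Add}_d$ (and powers of $Z_d$, $X_d$) does not secretly inflate the depth beyond the claimed $(d-1)$ factor nor break the scheduling; here the key observation is that every scalar multiplication is \emph{local} to one edge or one qubit, so the $d-1$ repetitions serialise harmlessly within a single colour class rather than forcing additional colours.
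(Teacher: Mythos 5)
Your proposal is correct and takes essentially the same route as the paper: the paper's own proof simply states that the argument is identical to the $d=2$ case of Theorem~\ref{thm:constdepth}, invoking the qudit QLNC formalism (including the full-rank property that keeps termination well-defined) and the $(d{-}1)$-fold repetition of $\mathrm{Add}_d$, $X_d$, $Z_d$ to realise scalar coefficients --- exactly the ingredients you spell out, just in more detail than the paper bothers to record.
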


\begin{corollary}[to Theorem~\ref{mainthm1}]
  Let $d$ be prime.
Consider a multiple multicast (including multiple unicast as a special case) classical $\mathbb Z_d$ linear network code exists over any subgraph of the graph $G'= G \cup K_n$ sending a unit rate stream, where each edge of the graph is a unit rate bi-directional edge (but not allowing fractional routing).
Suppose that this code has a set of transmitting source vertices $T' = \{t_1, \ldots, t_{N'}\}$ for some $N' > 0$, where the first $N < N'$ of these have coresponding reciever sets $R_j = \{r_{j,1} , \cdots , r_{j,n_j} \}$ for $1 \le j \le N$ (with the remaining transmitters $t_{N{+}1}, \ldots, t_{N'}$ having signals which are not necessarily actually received by any ``receiver'' nodes).
Suppose further that \textup{(a)}~the information transmitted on the edges of $K_n$ from any single node, and \textup{(b)}~the information transmitted by the nodes $t_1$ through $t_N$, are linearly independent of each other.
Then by simulating this linear network code by a QLNC circuit, with \textup{CNOT} operations restricted to the same graph $G$ and classically-controlled Pauli operations oriented along the other edges, the resulting protocol generates a product of $\ket{\Phi^+_d}$ and $\ket{\mathrm{GHZ}_{d,n_j}}$ states, where each $\ket{\Phi^+_d}$ or $\ket{\mathrm{GHZ}_{d,n_j}}$ is over each of the sets $\{ t_j, r_{j,1} , \cdots , r_{j,n_j} \}$ for all $1 \le j \le N$.
\end{corollary}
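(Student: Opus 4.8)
The plan is to obtain this corollary by transcribing the proof of Theorem~\ref{mainthm1} (given in full, for $d=2$, in Section~\ref{app1}) almost verbatim, replacing each qubit operation by its qudit analogue and each parity function tableau by a linear function tableau over $\mathbb{Z}_d$. The entire content of the corollary is the assertion that every ingredient used in the $d=2$ argument survives the passage to prime $d$. First I would recall that, by the construction of Section~\ref{qubitsection}, the given classical $\mathbb{Z}_d$ linear network code on a subgraph of $G' = G \cup K_n$ is simulated by a qudit QLNC circuit in which the $\mathrm{Add}_d$ gates (restricted to edges of $G$) play the role of the CNOTs, and the classically-controlled $X_d$ operations (oriented along the $K_n$ edges) play the role of the classically-controlled Pauli-$X$ gates. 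Since the formalism of Section~\ref{main} extends to linear function tableaus --- with $\mathrm{Add}_d$, $X_d$, $Z_d$ and the $X_d$/$Z_d$-eigenbasis measurements all simulated by the same column/row operations, now taken modulo $d$ --- the evolution of the node formulae $f_k(a)$ tracks the evolution of the formal messages in the classical code exactly as in the qubit case.

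Next I would carry the substitution argument of the appendix into the qudit setting. In the classical code each relay node forwards the $\mathbb{Z}_d$-linear combination of its incoming signals, and each receiver in $R_j$ recovers the signal $a_{t_j}$ of its paired transmitter; in the QLNC circuit the intermediate qudits are disentangled by termination, leaving a linear function state whose surviving node formulae are precisely those the classical code assigns to the transmitter and receiver nodes. The linear-independence hypotheses~(a) and~(b) --- that the signals sent on the $K_n$ edges from a common node, and the transmitter signals carried by $t_1, \ldots, t_N$, are linearly independent over $\mathbb{Z}_d$ --- then guarantee that these formulae impose no spurious correlations between distinct transmitter/receiver groups, so that the state factorises as the claimed product of $\ket{\Phi^+_d}$ and $\ket{\mathrm{GHZ}_{d,n_j}}$ states over the sets $\{t_j, r_{j,1}, \ldots, r_{j,n_j}\}$.

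The step I expect to require the most care is termination over qudits. In the $d=2$ case an $X$-measurement with outcome $-1$ induces a relative phase $\pm 1$, removed by applying $Z$ to a suitable subset $S$ of qudits (the termination Lemma of Section~\ref{sim}). Over $\mathbb{Z}_d$ an $X_d$-eigenbasis measurement returns an outcome $s \in \mathbb{Z}_d$ and introduces a phase of the form $\omega^{s\,a_g}$, so the correction is no longer a choice of subset but a choice of powers $Z_d^{m_j}$ with $m_j \in \mathbb{Z}_d$; one must check that the linear system $\sum_j m_j f_j = \phi$ determining these powers is always solvable. This is exactly where primality of $d$ enters: because $\mathbb{Z}_d$ is then a field, the linear function matrix $C$ retains full rank (as already noted at the end of Section~\ref{qubitsection}), the functions $f_j$ together with the constant $1$ still span, and the phase formula can be driven to $\mathbf 0$, giving the qudit analogue of the termination Lemma. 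Granting this, the remainder is a routine transcription of the $d=2$ argument, and the possible $d{-}1$ overhead in realising powers of $\mathrm{Add}_d$, $Z_d$ or $X_d$ affects only the gate count, not the correctness or the structure of the output state.
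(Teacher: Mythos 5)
Your proposal is correct and follows essentially the same route as the paper, whose entire proof of this corollary is the observation that the argument for Theorem~\ref{mainthm1} carries over verbatim once the QLNC formalism is extended to qudits of prime dimension. The one point you flag as delicate --- that termination requires solving a linear system over $\mathbb{Z}_d$, which works precisely because $d$ prime makes $\mathbb{Z}_d$ a field and preserves the full-rank property of the linear function matrix --- is exactly the point the paper addresses in its closing remark of the formalism subsection, so your elaboration matches the paper's intent rather than departing from it.
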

The proofs of these statements are identical to the case $d= 2$, applying the extension of the QLNC formalism to $d > 2$ dimensional qudits.

\section{Remarks on computational complexity}
\label{comp}

We now consider the computational complexity of the QLNC formalism, and also remark on the complexity of finding linear network codes.

\subsection{Comparison of the QLNC formalism to the stabiliser formalism}

Recall that a parity function tableau  on $n$ qubits is a matrix of size $(N{+}1)\times(n{+}2)$, where $0 \le N \le n$ is some number of indeterminates involved in the expression of the state.
As every parity function tableau has the same first column, the amount of information can be bounded above by $(N{+}1) \times (n{+}1)$ bits.
By allocating enough space for an $(n{+}1)\times(n{+}1)$ matrix, and by maintaining lists to record which rows and columns in this space are actually occupied, we suppose that the data structure used for the matrix allows for $O(1)$ time  row and column insertion and removal, apart from the time required to actually initialise the entries of new rows or columns.

Several of the QLNC circuit operations may be represented by very simple operations or transformations on the tableau:
\begin{itemize}
\item
    Preparation of a fresh qubit involves introducing a new row and a new column, which involves $O(n + N)$ time to initialise.
\item
    Performing a single CNOT, $X$, or $Z$ gate involves an elementary row operation, which requires $O(N) \subseteq O(n)$ time to perform.
\end{itemize}
Others of the operations are somewhat more involved:
\begin{itemize}
\item
    Performing measurements --- destructive or otherwise --- involves first putting the parity function tableau into a reduced row echelon form, which requires $O(N^2 n)$ time.
    This dominates the run-time required for the remaining operations:
    \begin{itemize}
    \item 
        For an $X$ measurement, the subsequent operations may involve adding a new row, which takes $O(n + N)$ time; and adding a vector of size $O(N)$ to two columns, which takes $O(N)$ time.
    \item
        For a $Z$ measurement, the subsequent operations may involve adding a column vector of size $O(N)$ to $O(n)$ columns, and removing a row and a column, which all together takes $O(Nn)$ time.
    \end{itemize}
\item
    Terminating a qubit requires a measurement, and also  an appropriate set of qubits on which to perform $Z$ operations.
    Finding the latter also involves putting the tableau in reduced row echelon form, and $O(N)$ further work to determine an appropriate correction set; thus this also takes time $O(N^2 n)$.
\end{itemize}

A natural comparison to make is with the stabiliser formalism~\cite{Aaronson2004}.
This also requires $O(n^2)$ space, with little room for improvement beyond techniques to represent sparse matrices.
Preparation of a fresh qubit in the stabiliser formalism similarly involves extending the matrix, and takes $O(n)$ time; realising a CNOT, $X$, or $Z$ operation takes time $O(n)$; and simulating a measurement may similarly involve Gaussian elimination (if it happens that the outcome is deterministic), requiring $O(n^3)$ time using Gaussian elimination.
In the worst case where $N \in \Theta(n)$, the run-time bounds for the QLNC formalism then matches that of the stabiliser formalism; but if (say) $N \in O(n^{1/4})$ for a given circuit, we obtain a quadratic improvement on the complexity of measurement, and a quartic improvement in the complexity of simulating CNOT, $X$, and $Z$ gates.

The computational advantage of the QLNC formalism, when simulating QLNC circuits, is the ability to take advantage of the potential for the parity function tableau to occupy space ${}\! \ll n^2$, in which case the operations required to transform it are less computationally costly.
Even in the worst case, the fact that parity function tableaus  have size $n^2 + O(n)$, rather than size $2n^2 + O(n)$, will also yield a mild improvement in performance.

\subsection{On the complexity of finding QLNC circuits for entanglement distribution problems}

Here, we consider the complexity of \emph{finding} a QLNC to perform a particular entanglement distribution task in a given network $G$.

It is clear that when a linear network code for the classical $k$-pairs (or multiple multicast) problem exists in a particular network $G$, we may easily convert this to a constant depth QLNC circuit to solve the corresponding entanglement distribution problem on a quantum architecture based on the same network $G$ (with the mild restriction that nodes are either transmitters or receivers or relays, as previously discussed).
However, it is not always easy to find such a linear network code.
Lehman and Lehman \cite{Lehman2004} show that deciding whether a network code exists is NP-hard in general.
As Kobayashi \textit{et al} \cite{Kobayashi2009} note, the $k$-pair problem is thus itself NP-hard, as all network coding can be reduced to an instance of the $k$-pair problem~\cite{Dougherty2006}.

Given that finding network codes is a hard problem in general, it is reasonable to ask whether reducing the problem of entanglement distribution to the problem of finding linear network codes is of any practical advantage.
One answer to this is that the problem of classical network coding has already received significant attention (\emph{e.g.}, \cite{Dougherty2006, LNCnov1, LNCnov2, LNCnov3, LNCnov4}), and thus such a reduction enables existing results and understanding to be transferred to the problem of entanglement distribution.
Furthermore, the existing proof of NP-hardness appears to require a somewhat specialised network architecture in principle.
(To us, this seems to mirror the situation with the bounds on the depth of the ``constant-depth'' QLNC circuits described in Theorem~\ref{thm:constdepth}: while the bound depends on parameters such as vertex-chromatic number which are NP-hard to compute in general, in many practical examples they may be computed very easily.)
Finally, as we allow unconstrained classical control in QLNCs (\emph{i.e.},~the classical control could be thought of as being transmitted through a completely connected graph, as in Section~\ref{eswap}), we should expect it to be easier to find a QLNC for a network $G$, and perhaps to sometimes find a QLNC for entanglement distribution where there is no solution to the corresponding classical linear network coding problem.

In any case, as our results more generally allow an edge to be used more than once, it is not clear whether we should expect the problem of finding QLNC solutions to entanglement distribution to be similar to that of solving the $k$ pairs problem.
The complexity of this problem is open; though from our results, it is clear that it cannot be worse than NP-hard.
We conjecture that it should be possible to do so in polynomial time.

\section{Proof of Theorem~\ref{mainthm1}}
\label{app1}

Finally, we present here a more thorough presentation of the proof of Theorem~\ref{mainthm1}. In particular, we adopt a more concrete presentation in the hopes of describing in some detail what transformations of the states involved.

Let there be $n$ qubits, ordered such that the first $n_1$ are prepared in the $\ket{+}$ state and the remaining $n_2 = n - n_1$ are prepared in the $\ket{0}$ state. The QLNC circuits described consist of four primitives: initialisation (i.e., preparation of qubits in the $\ket{+}$ or $\ket{0}$ state, as stated directly above); CNOT gates; measurements which can classically control Pauli-$X$ gates on other qubits; and termination operations. Firstly, we note that the principle of deferred measurement can be used to express an equivalent circuit with the classically controlled $X$ gates replaced by CNOT gates and deferred measurement on the control qubit, as shown in Fig.~\ref{f004}(a) and (b), and secondly, we address a generalised version of the circuit in question, as shown in Fig.~\ref{f004}(c). The remainder of the proof proceeds as follows: firstly, we relate the state of this generalised circuit after the network of CNOT gates to the actual circuit we want to express; secondly, we prove by induction that the state is correctly simulated by the QLNC formalism as the individual CNOT gates are executed; thirdly we show that the termination process does indeed remove qubits as required, without disturbing the rest of the state; and finally we show that the desired product of $\ket{\Phi^+}$ and $\ket{\mathrm{GHZ}}$ states is only realised if and only if the measurements do not reveal information thereabout, and that the Gaussian elimination procedure described is necessary and sufficient to verify this.\\
\begin{figure}[!t]
	\centering
	\includegraphics[width=0.58\textwidth]{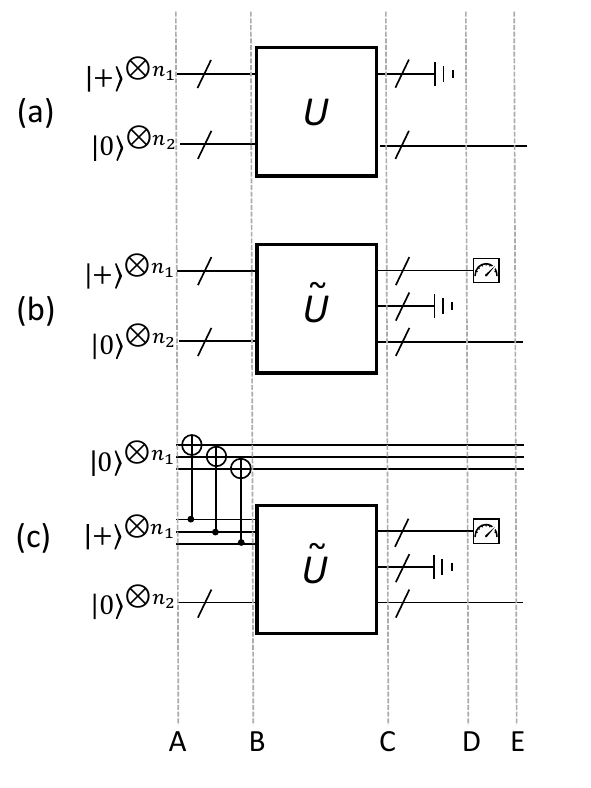}
	\captionsetup{width=0.95\linewidth}
	\caption{\small{Illustration of equivalent circuits used in the proof of Theorem~\ref{mainthm1}, the three parallel vertical lines of descending size (i.e., a rotated `earth' symbol, as used in electrical engineering) denotes termination: (a) shows the actual quantum circuit, consisting of qubits initialised in the $\ket{+}$ and $\ket{0}$ states, a network of $CNOT$ gates, and measurements classically controlling Pauli-$X$ gates, and  some terminations; (b) shows the same circuit, but now with deferred measurement (such that the classically controlled Pauli-$X$ gates can be represented as $CNOT$ gates; and (c) shows the circuit with additional ancilla qubits entangled with the qubits initialised in the $\ket{+}$ state, as is required in the proof.}}
	\label{f004}
\end{figure}
\indent Fig.~\ref{f004} illustrates a general instance of the circuit, in which $U$, in Fig.~\ref{f004}(a), is a block consisting of CNOT gates and classically controlled Pauli-$X$ gates, in Fig.~\ref{f004}(b) the principle of deferred measurement is used to draw an equivalent circuit, with CNOT gates replacing Pauli-$X$ in a block now labelled $\tilde{U}$, with measurements deferred until the end of the circuit. This allows us to write down the state directly after $\tilde{U}$:
\begin{align}
    \ket{\psi_C} & = \tilde{U}(\ket{+}^{\otimes n_1}  \ket{0}^{\otimes n_2} ) \nonumber \\
    \label{app1eq10}
    & =  \frac{1}{\sqrt{2^{n_1}}}\sum_{i=0}^{2^{n_1}-1} \tilde{U}(\ket{i}  \ket{0}^{\otimes n_2} ),
\end{align}
where $i$ is binary number, later in the analysis we use $\mathbf{i}$ as the binary vector corresponding to the binary number $i$ (i.e., the $j^{th}$ element of $\mathbf{i}$ is the $j^{th}$ digit of $i$) and we use each of $\ket{i}$ and $\ket{\mathbf{i}}$ to denote the corresponding $n_q$-qubit quantum state, where $n_q$ is the number of digits in $i$ (and therefore the number of elements in $\mathbf{i}$). In the analysis, it is helpful to consider the circuit in Fig.~\ref{f004}(c), in which $n_1$ ancilla qubits are prepended to the state. Each ancilla is initialised in the $\ket{0}$ state, and then is the target of a CNOT by one of the qubits initialised in the $\ket{+}$ states (that is, a different one of these qubits controls the CNOT for each of the ancillas). This allows the state before $\tilde{U}$ to be expressed:
\begin{equation}
    \label{app1eq15}
    \ket{\tilde{\psi}_B} = \frac{1}{\sqrt{2^{n_1}}}\sum_{i=0}^{2^{n_1}-1} \ket{i}\ket{i}  \ket{0}^{\otimes n_2} .
\end{equation}
which in turn allows us to express the state after $\tilde{U}$:
\begin{equation}
\label{app1eq20}
    \ket{\tilde{\psi}_C} = \frac{1}{\sqrt{2^{n_1}}}\sum_{i=0}^{2^{n_1}-1} \ket{i} \tilde{U}(\ket{i}  \ket{0}^{\otimes n_2} ).
\end{equation}
These extra ancillas have been introduced to make it easier to keep track of the state, and we later rely on the correspondence between (\ref{app1eq10}) and (\ref{app1eq20}) to show that these additional ancillas are indeed just an analytical device and do not affect the validity of the simulation of the actual circuit in the formalism.\\
%
%
\indent We can now introduce the QLNC formalism in vectorised form. We order the vector such that the first $n_1$ elements correspond to the qubits initialised in the $\ket{+}$ state, and are therefore labelled with a unique symbol in the initialisation process. Furthermore, the effect of each of these performing a CNOT on (a distinct) one of the ancillas is to copy the label to the ancilla, therefore it is convenient to think of the ancillas as having been labelled. Let these labels be  $a_1 \cdots a_{n_1}$, and the actual qubits be labelled $q_1 ... q_n$ (which in general will be sums over the terms  $a_1  \cdots  a_{n_1}$). Stacking these up into vectors, we have that $\mathbf{a} = [a_1 , \cdots , a_{n_1}]^T$ and $\mathbf{q} = [q_1 , \cdots , q_{n}]^T$, such that:
\begin{equation}
\label{app1eq30}
    \mathbf{q} = \mathrm{L} \mathbf{a},
\end{equation}
where $\mathrm{L}$ is a $n \times n_1$ binary matrix which keeps track of how the labels of the various qubits are related to the ancilla labels, i.e., initially $\mathrm{L} = [\mathbbm{1} | \mathbf{0}]^T$. In the QLNC formalism, the operation of a CNOT with the $j^{th}$ qubit controlling the $k^{th}$ qubit, is that the $j^{th}$ row of $\mathrm{L}$ is added to the $k^{th}$ row (modulo-2), that is $\mathrm{L}_{k,*} \leftarrow \mathrm{L}_{k,*} + \mathrm{L}_{j,*}$ (here `$*$' means the entirety of that row). Moving on to the network of CNOT gates (including those which have been included by the deferred measurement equivalence), we prove by induction that the quantum state is in the form:
\begin{equation}
\label{app1eq40}
    \ket{\tilde{\psi}_{BC}} = \frac{1}{\sqrt{2^{n_1}}}\sum_{i=0}^{2^{n_1}-1} \ket{i}   \ket{\mathrm{L}\mathbf{i}} ,
\end{equation}
where $\ket{\tilde{\psi}_{BC}}$ is the quantum state at an arbitrary point between $\ket{\tilde{\psi}_{B}}$ and $\ket{\tilde{\psi}_{C}}$ (i.e., within the block of CNOT gates, $\tilde{U}$).\\
%
%
\indent For the inductive proof, we observe that the initial definition of $\mathrm{L}$ (i.e., in the text below (\ref{app1eq30}) is of a format that corresponds to this definition, i.e., for the initial state in (\ref{app1eq20}). Turning to how the quantum state is changed by a CNOT gate, to simplify the notation (and without loss of generality) we re-order the qubits (and therefore the rows of $\mathrm{L}$ and $\mathbf{q}$) such that the first qubit is the control, and the second the target, before the CNOT we have:
\begin{align}
    \ket{\tilde{\psi}_{BC}} = & \frac{1}{\sqrt{2^{n_1}}} \left( \sum_{i\, s.t. \mathrm{L}_{1,*}\mathbf{i} = 0 , \, \mathrm{L}_{2,*}\mathbf{i} = 0 } \ket{i} \ket{00}  \ket{\psi_i'} + \sum_{i\, s.t. \mathrm{L}_{1,*}\mathbf{i} = 0 , \, \mathrm{L}_{2,*}\mathbf{i} = 1 } \ket{i} \ket{01}  \ket{\psi_i''} \right. \nonumber \\
    \label{app1eq50}
    & \left. \, \, \, \, \, \, \, \, \, \, \, \, \, \, \, \, \, \, \, \, \, \, \, \, \,  + \sum_{i\, s.t. \mathrm{L}_{1,*}\mathbf{i} = 1 , \, \mathrm{L}_{2,*}\mathbf{i} = 0 } \ket{i} \ket{10}  \ket{\psi_i'''} + \sum_{i\, s.t. \mathrm{L}_{1,*}\mathbf{i} = 1 , \, \mathrm{L}_{2,*}\mathbf{i} = 1 } \ket{i} \ket{11}  \ket{\psi_i''''} \right),
\end{align}
where $s.t.$ means `such that', and $\ket{\psi_i'}$, $\ket{\psi_i''}$, $\ket{\psi_i'''}$ and $\ket{\psi_i''''}$ represent the remainder of the quantum state in each term, which is not required for this analysis. After the performing a CNOT gate on the first two qubits we have:
\begin{align}
    \left(\textnormal{CNOT} \otimes \id_{n-2}\right)\ket{\tilde{\psi}_{BC}} = & \frac{1}{\sqrt{2^{n_1}}} \left( \sum_{i\, s.t. \mathrm{L}_{1,*}\mathbf{i} = 0 , \, \mathrm{L}_{2,*}\mathbf{i} = 0 } \ket{i} \ket{00}  \ket{\psi_i'} + \sum_{i\, s.t. \mathrm{L}_{1,*}\mathbf{i} = 0 , \, \mathrm{L}_{2,*}\mathbf{i} = 1 } \ket{i} \ket{01}  \ket{\psi_i''} \right. \nonumber \\
    & \left. \, \, \, \, \, \, \, \, \, \, \, \, \, \, \, \, \,   + \sum_{i\, s.t. \mathrm{L}_{1,*}\mathbf{i} = 1 , \, \mathrm{L}_{2,*}\mathbf{i} = 0 } \ket{i} \ket{11}  \ket{\psi_i'''} + \sum_{i\, s.t. \mathrm{L}_{1,*}\mathbf{i} = 1 , \, \mathrm{L}_{2,*}\mathbf{i} = 1 } \ket{i} \ket{10}  \ket{\psi_i''''} \right), \nonumber \\
    = & \frac{1}{\sqrt{2^{n_1}}} \left( \sum_{i\, s.t. \mathrm{L}_{1,*}\mathbf{i} = 0 , \, (\mathrm{L}_{1,*} + \mathrm{L}_{2,*})\mathbf{i} = 0 } \ket{i} \ket{00}  \ket{\psi_i'} + \sum_{i\, s.t. \mathrm{L}_{1,*}\mathbf{i} = 0 , \, (\mathrm{L}_{1,*} + \mathrm{L}_{2,*})\mathbf{i} = 1 } \ket{i} \ket{01}  \ket{\psi_i''} \right. \nonumber \\
    \label{app1eq60}
    & \left.   + \sum_{i\, s.t. \mathrm{L}_{1,*}\mathbf{i} = 1 , \, (\mathrm{L}_{1,*} + \mathrm{L}_{2,*})\mathbf{i} = 1 } \ket{i} \ket{11}  \ket{\psi_i'''} + \sum_{i\, s.t. \mathrm{L}_{1,*}\mathbf{i} = 1 , \, (\mathrm{L}_{1,*} + \mathrm{L}_{2,*})\mathbf{i} = 0 } \ket{i} \ket{10}  \ket{\psi_i''''} \right), 
\end{align}
which we can see is consistent with the operation of a CNOT where the first qubit controls the second in the QLNC formalism, i.e., the assignment $\mathrm{L}_{2,*} \leftarrow \mathrm{L}_{2,*} + \mathrm{L}_{1,*}$, thereby completing the inductive proof. It is worth observing that, while our proposed formalism was conceptualised from the starting point of classical network codes (as emphasised in Corollary~\ref{corr}), the manner in which the state is tracked bears some resemblance to the quadratic representation of the Stabliser formalism as described by Dehaene and de Moor \cite{dehaene}.\\
\indent $\ket{\tilde{\psi}_C}$ is simply $\ket{\tilde{\psi}_{BC}}$ after all of the CNOT gates in $\tilde{U}$ have been executed, and using the correspondence between (\ref{app1eq10}) and (\ref{app1eq20}) allows us to express $\ket{\psi_C}$ from (\ref{app1eq40}):
\begin{equation}
\label{app1eq61}
    \ket{\psi_{C}} = \frac{1}{\sqrt{2^{n_1}}}\sum_{i=0}^{2^{n_1}-1}   \ket{\mathrm{L}\mathbf{i}} ,
\end{equation}
\indent The next step in the circuit is the termination of any qubit which is left such that its label is the sum of two or more symbols, and indeed any other qubits with a single symbol label if desired. In the termination process the goal is, for any given post measurement state, that the corresponding qubits should be measured out in such a way that the superposition of quantum states is the same for the rest of the qubits, whichever state was measured. That is, if the state is $\ket{0}\ket{\phi} + \ket{1}\ket{\tilde{\phi}}$, termination and removal of the first qubit should leave the state $\ket{\phi} + \ket{\tilde{\phi}}$. This can be achieved in one of three ways: firstly, if the qubit to be terminated has a label which can be expressed exactly as a sum of qubits that are measured, then it can be measured out directly, as no additional information will be learned by doing so (in reality this measurement will have already taken place, although for the analysis this is treated as a deferred measurement, but does this does not affect the validity of measuring it out directly). Conversely, in the case where the label of the qubit to be terminated is linearly independent of all of the other qubit labels, then it can also be measured out, as this will not reveal any information about the entangled superposition of interest. To see this, WLoG we re-order the qubits (including the ancilla qubits) such that the first qubit is to be terminated, from which we can express the state:
\begin{equation}
\label{app1eq75}
    \ket{\psi_{CD}}= \frac{1}{\sqrt{2^{n_1}}}\sum_{i=0}^{2^{n_1}-1}   \ket{\mathrm{L}_{1,*}\mathbf{i}}\ket{ \mathrm{L}_{2:n,*}\mathbf{i}},
\end{equation}
and thus we can see that because $\mathrm{L}_{1,*}$ is linearly independent of all of the other rows of $\mathrm{L}$, measuring it out will not collapse any other terms in the superposition.\\
%
%
%
\indent So we move onto the third option for termination, where the qubit to be terminated can be expressed as a sum of qubit labels, of which at least some haven't been measured. Once again, for simplicitly of exposition and without loss of generality, we consider that it is the first qubit, labelled $q_1$, that is to be terminated. To see how the termination process works, first let us write the linear expression of $q_1$ in terms of the other qubit labels: $q_1 = \mathbf{r}^T\mathbf{q}_{2:n}$, where $\mathbf{r}$ is a binary vector that selects the other qubits whose labels sum to $q_1$. We now express $\mathbf{r} = \mathbf{r}_a + \mathbf{r}_b$, such that $\mathbf{r}_a$ corresponds to qubits that are measured, and $\mathbf{r}_b$ corresponds to qubits that are not measured. Thus we can re-express (\ref{app1eq61}), noting that $\mathbf{q}_{2:n} = \mathrm{L}_{2:n_1,*}\mathbf{a}$, from (\ref{app1eq30}):
\begin{equation}
\label{app1eq90}
    \ket{\psi_{CD}} = \frac{1}{\sqrt{2^{n_1}}} \left( \sum_{i  \, s.t. \, (\mathbf{r}_a + \mathbf{r}_b )^T \mathrm{L}_{2:n_1,*} \mathbf{i} = 0}  \ket{0} \ket{\mathrm{L}_{2:n_1,*} \mathbf{i}} + \sum_{i  \, s.t. \, (\mathbf{r}_a + \mathbf{r}_b)^T \mathrm{L}_{2:n_1,*} \mathbf{i} = 1}  \ket{1} \ket{\mathrm{L}_{2:n_1,*} \mathbf{i}} \right),
\end{equation}
taking first the case where the existing measurements are such that $\mathbf{r}_a^T \mathrm{L}_{2:n_1,*}=0$,  (\ref{app1eq90}) becomes:
\begin{equation}
\label{app1eq95}
    \ket{\psi_{CD}} = \frac{1}{\sqrt{2^{n_1}}} \left( \sum_{i  \, s.t. \, \mathbf{r}_b T \mathrm{L}_{2:n_1,*} \mathbf{i} = 0}  \ket{0} \ket{\mathrm{L}_{2:n_1,*} \mathbf{i}} + \sum_{i  \, s.t. \, \mathbf{r}_b^T \mathrm{L}_{2:n_1,*} \mathbf{i} = 1}  \ket{1} \ket{\mathrm{L}_{2:n_1,*} \mathbf{i}} \right).
\end{equation}
Next, we treat the $X$ observable measurement in the equivalent form of a Hadamard gate, followed by a $Z$ observable (computational basis) measurement. Thus, the Hadamard gate transforms (\ref{app1eq95}) to:
\begin{equation}
\label{app1eq100}
    \ket{\psi_{CD}} = \frac{1}{\sqrt{2^{n_1+1}}} \left( \sum_{i  \, s.t. \, \mathbf{r}_b^T \mathrm{L}_{2:n_1,*} \mathbf{i} = 0}  (\ket{0}+\ket{1}) \ket{\mathrm{L}_{2:n_1,*} \mathbf{i}} + \sum_{i  \, s.t. \, \mathbf{r}^T_b \mathrm{L}_{2:n_1,*} \mathbf{i} = 1}  (\ket{0}-\ket{1}) \ket{\mathrm{L}_{2:n_1,*} \mathbf{i}} \right).
\end{equation}
After which the state is measured, and so we must address each of the cases where we measure each of $0$ or $1$. In the former we can see that the state collapses to:
\begin{equation}
\label{app1eq100a}
    \ket{\psi_{CD}} = \frac{1}{\sqrt{2^{n_1+1}}} \left( \sum_{i  \, s.t. \, \mathbf{r}_b^T \mathrm{L}_{2:n_1,*} \mathbf{i} = 0}  \ket{0} \ket{\mathrm{L}_{2:n_1,*} \mathbf{i}} + \sum_{i  \, s.t. \, \mathbf{r}^T_b \mathrm{L}_{2:n_1,*} \mathbf{i} = 1}  \ket{0} \ket{\mathrm{L}_{2:n_1,*} \mathbf{i}} \right),
\end{equation}
with the terminated qubit still included. Whereas if we measure $1$, we get:
\begin{equation}
\label{app1eq100b}
    \ket{\psi_{CD}} = \frac{1}{\sqrt{2^{n_1+1}}} \left( \sum_{i  \, s.t. \, \mathbf{r}_b^T \mathrm{L}_{2:n_1,*} \mathbf{i} = 0}  \ket{1} \ket{\mathrm{L}_{2:n_1,*} \mathbf{i}} - \sum_{i  \, s.t. \, \mathbf{r}^T_b \mathrm{L}_{2:n_1,*} \mathbf{i} = 1}  \ket{1} \ket{\mathrm{L}_{2:n_1,*} \mathbf{i}} \right),
\end{equation}
However, by definition, zero is measured when there are an even number of ones in $\mathbf{r}_b \odot (\mathrm{L}_{2:n_1,*} \mathbf{i})$ and one is measured when there are an odd number of ones therein (where $\odot$ means element-wise multiplication). Therefore, applying a Pauli-$Z$ (phase) gate to each qubit which corresponds to a one in $\mathbf{r}$ guarantees the correct adjustment, and this is exactly what is prescribed in the termination process. Thus, after the correction (\ref{app1eq100b}) becomes:
\begin{equation}
\label{app1eq100bb}
    \ket{\psi_{CD}} = \frac{1}{\sqrt{2^{n_1+1}}} \left( \sum_{i  \, s.t. \, \mathbf{r}_b^T \mathrm{L}_{2:n_1,*} \mathbf{i} = 0}  \ket{1} \ket{\mathrm{L}_{2:n_1,*} \mathbf{i}} + \sum_{i  \, s.t. \, \mathbf{r}^T_b \mathrm{L}_{2:n_1,*} \mathbf{i} = 1}  \ket{1} \ket{\mathrm{L}_{2:n_1,*} \mathbf{i}} \right),
\end{equation}
\indent Turning now to the alternative situation, where the existing measurements are such that $\mathbf{r}_a^T \mathrm{L}_{2:n_1,*}=1$,  (\ref{app1eq90}) becomes:
\begin{equation}
\label{app1eq95a}
    \ket{\psi_{CD}} = \frac{1}{\sqrt{2^{n_1}}} \left( \sum_{i  \, s.t. \, \mathbf{r}_b T \mathrm{L}_{2:n_1,*} \mathbf{i} = 0}  \ket{1} \ket{\mathrm{L}_{2:n_1,*} \mathbf{i}} + \sum_{i  \, s.t. \, \mathbf{r}_b^T \mathrm{L}_{2:n_1,*} \mathbf{i} = 1}  \ket{0} \ket{\mathrm{L}_{2:n_1,*} \mathbf{i}} \right).
\end{equation}
which the Hadamard gate transforms to:
\begin{equation}
\label{app1eq100n}
    \ket{\psi_{CD}} = \frac{1}{\sqrt{2^{n_1+1}}} \left( \sum_{i  \, s.t. \, \mathbf{r}_b^T \mathrm{L}_{2:n_1,*} \mathbf{i} = 0}  (\ket{0}-\ket{1}) \ket{\mathrm{L}_{2:n_1,*} \mathbf{i}} + \sum_{i  \, s.t. \, \mathbf{r}^T_b \mathrm{L}_{2:n_1,*} \mathbf{i} = 1}  (\ket{0}+\ket{1}) \ket{\mathrm{L}_{2:n_1,*} \mathbf{i}} \right).
\end{equation}
After which the state is measured, and so we must address each of the cases where we measure each of $0$ or $1$. In the former we can see that the state again collapses to:
\begin{equation}
\label{app1eq100an}
    \ket{\psi_{CD}} = \frac{1}{\sqrt{2^{n_1+1}}} \left( \sum_{i  \, s.t. \, \mathbf{r}_b^T \mathrm{L}_{2:n_1,*} \mathbf{i} = 0}  \ket{0} \ket{\mathrm{L}_{2:n_1,*} \mathbf{i}} + \sum_{i  \, s.t. \, \mathbf{r}^T_b \mathrm{L}_{2:n_1,*} \mathbf{i} = 1}  \ket{0} \ket{\mathrm{L}_{2:n_1,*} \mathbf{i}} \right),
\end{equation}
with the terminated qubit still included. Whereas if we measure $1$, we get:
\begin{equation}
\label{app1eq100bn}
    \ket{\psi_{CD}} = \frac{1}{\sqrt{2^{n_1+1}}} \left( -\sum_{i  \, s.t. \, \mathbf{r}_b^T \mathrm{L}_{2:n_1,*} \mathbf{i} = 0}  \ket{1} \ket{\mathrm{L}_{2:n_1,*} \mathbf{i}} + \sum_{i  \, s.t. \, \mathbf{r}^T_b \mathrm{L}_{2:n_1,*} \mathbf{i} = 1}  \ket{1} \ket{\mathrm{L}_{2:n_1,*} \mathbf{i}} \right),
\end{equation}
However, applying the same correction as before, we get:
\begin{equation}
\label{app1eq100bbn}
    \ket{\psi_{CD}} = -\frac{1}{\sqrt{2^{n_1+1}}} \left( \sum_{i  \, s.t. \, \mathbf{r}_b^T \mathrm{L}_{2:n_1,*} \mathbf{i} = 0}  \ket{1} \ket{\mathrm{L}_{2:n_1,*} \mathbf{i}} + \sum_{i  \, s.t. \, \mathbf{r}^T_b \mathrm{L}_{2:n_1,*} \mathbf{i} = 1}  \ket{1} \ket{\mathrm{L}_{2:n_1,*} \mathbf{i}} \right),
\end{equation}
So we can see that, regardless of the previous measurement outcomes, and the outcome of the $X$-basis measurement of the qubit being terminated, we get the same quantum state up to an unobservable global phase.\\
%
%
\indent Following the layer of terminations, we can express the state as:
\begin{equation}
    \label{app1eq110}
    \ket{\psi_{D}} = \frac{1}{\sqrt{2^{n'_1}}}\sum_{i=0}^{2^{n_1'}-1}   \ket{\mathrm{L}\mathbf{i}} ,
\end{equation}
where we have omitted any qubits that have been measured out and any labels that are no longer present in any qubit label. Thus rows in $\mathrm{L}$ will have either exactly one, or more than one element equal to one (and the rest equal to zero, as it is a binary matrix). We know that all rows of $\mathrm{L}$ with multiple elements equal to one are measured (i.e., otherwise they would have been terminated), and in general some rows with exactly one element equal to 1 may be measured too. To verify that none of these measurements imparts information that would collapse the superposition of interest, we follow the same rationale as that described around (\ref{app1eq75}). Specifically, we construct a $n \times n_M$ matrix (where $n_M$ is the number of measurements), $\mathrm{M}$ such that each row corresponds to one measurement. For example, if we have five symbols in total, $a_1 \cdots a_5$, and we measure a qubit labelled $a_1 \oplus a_4$, the corresponding row of $\mathrm{M}$ would be $[1,0,0,1,0]$. We now re-order the columns of $\mathrm{M}$ such that the first $n_r$ correspond to symbols that aren't present in the final entangled state, and perform Gaussian elimination such that the matrix is in upper-echelon form, let this transformed version of $\mathrm{M}$ be denoted $\mathrm{M}'$. A necessary and sufficient condition for the measurements not to have imparted any information that collapses the final entangled state is that each row which is not all zeros should have at least one element equal to one in the first $n_r$ columns. An example of $\mathrm{M}'$ is shown in (\ref{mateq}), and the necessary and sufficient condition essentially means that the label of each measured qubit includes at least one unique symbol, not present in any other label (either those of other measured qubits, or in the labels of the qubits that compose the final state), and thus, by the same reasoning given in and around (\ref{app1eq75}) means that the final state will not be collapsed by this measurement.
%
%

%
\begin{equation}
\label{mateq}
  \mathrm{M}' =   \left. \left[
\begin{array}{c c c c c c}
\bovermat{$n_r$ cols}{1 & \cdots & & & &  \cdots} \\
0 & 1 & & & & \cdots\\
\vdots & \ddots & 1 & & & \cdots \\
& & & 1 & & \cdots \\
& & &  & \ddots & \cdots \\
\end{array}
\right] \right\} \text{\scriptsize{$n_M$ rows}} \normalsize .
\end{equation}
\indent Having performed these measurements, and verified the condition of not imparting information that collapses the superposition, we have the final state:
\begin{equation}
    \label{app1eq120}
    \ket{\psi_{E}} = \frac{1}{\sqrt{2^{n''_1}}}\sum_{i=0}^{2^{n_1''}-1}   \ket{\mathrm{L}\mathbf{i}} ,
\end{equation}
where each row of $\mathrm{L}$ has exactly one element equal to one. Rows of $\mathrm{L}$ whose element equal to 1 is in the same column will be labelled with the same single symbol (i.e., according to the definition in (\ref{app1eq30})), and thus we can see that this will correspond to the product of $\ket{\Phi^+}$ and $\ket{\mathrm{GHZ}}$ states as specified, thus completing the proof.

\section{Summary}

In this article, we consider the problem of entanglement distribution in quantum architectures with constraints on the interactions between pairs of qubits, described by a network $G$.
We describe how this problem may be fruitfully reduced to solving the $k$ pairs problem through linear network coding, on the same network $G$; and we describe how such codes may be simulated to achieve entanglement distribution using a shallow circuit, independent of the size of $G$ or the distance over which the entanglement is to be distributed.
We also present several novel observations about realising linear network codes through stabiliser circuits.

For the purposes of practically realising operations on practical quantum architectures, it will be of interest both to reduce the depth of circuits to distribute entanglement, and to efficiently discover protocols to do so.
However, it will also be important to address issues which we have not considered here, such as the fidelity of the entanglement which is distributed to some known Bell state.
We do not expect the quality of such Bell states to be independent of the distance over which such entangled states are distributed, or the number of Bell states which are distributed in parallel using QLNC circuits.
Nevertheless, it may be feasible to consider techniques to mitigate what noise may be present.
We hope that it may be possible to do so while incorporating the apparent benefits that QLNC circuits theoretically provide in the noiseless case.

\section*{Acknowledgements}
This work was supported by an Networked Quantum Information Technologies Hub Industrial Partnership Project Grant. The authors also thank Earl Campbell, Steve Brierley and the team at Riverlane for their encouragement and the general discussions that helped to shape this paper. 

\appendix

\end{document}